\font\ec=ecrm0800 at 12pt
\def\thorn{\hbox{\ec\th}}
\def\eth{\hbox{\ec\dh}}
\newtheorem{theorem}{Theorem}
\newtheorem*{corollary}{Corollary}
\newtheorem{lemma}[theorem]{Lemma}
\newcommand\widebar[1]{\mathop{\overline{#1}}}
\providecommand{\abs}[1]{\left\lvert#1\right\rvert}
\providecommand{\norm}[1]{\left\lVert#1\right\rVert}
\providecommand{\swss}[2]{S_{m, \lambda}^{[#1], (#2)}}
\providecommand{\swsl}[2]{\lambda_m^{[#1], (#2)}}
\providecommand{\swsc}[2]{C_{m, \lambda}^{[#1], (#2)}}
\def\half{\frac{1}{2}}
\renewcommand{\Re}{\operatorname{Re}}
\renewcommand{\Im}{\operatorname{Im}}
\begin{document}

\title{On the radiation gauge for spin-1 perturbations in Kerr-Newman spacetime}

\author{Stefan Hollands}
\email{stefan.hollands@uni-leipzig.de}
\affiliation{Institut f\"ur Theoretische Physik, Universit\"at Leipzig and \\
Max-Planck-Institute MiS, Leipzig}

\author{Vahid Toomani}
\email{vahid.toomani@mis.mpg.de}
\affiliation{Institut f\"ur Theoretische Physik, Universit\"at Leipzig and \\
Max-Planck-Institute MiS, Leipzig}

\begin{abstract}
We extend previous work \cite{Green:2019nam} to the case of Maxwell's equations with a source.
Our work shows how to construct a vector potential for the Maxwell field on the Kerr-Newman background in a radiation gauge. The vector potential has a ``reconstructed'' term obtained from a Hertz potential solving Teukolsky's equation with a source, and a ``correction'' term which is obtainable by a simple integration along outgoing principal null rays. The singularity structure of our vector potential is discussed in the case of a point particle source.
\end{abstract}

\maketitle
\tableofcontents

\section{Introduction}

Perturbed black hole spacetimes play an important role in gravitational physics, for instance when modeling the ringdown phase of a black hole merger or to describe accurately an extremal mass ratio inspiral. 
In so far as linear perturbations are concerned, the Teukolsky formalism \cite{Teukolsky:1973ha, Teukolsky:1972my} has many practical and conceptual advantages. As originally formulated, it yields the perturbations of the Weyl tensor $C_{abcd}$ (spin 2) or the Maxwell tensor $F_{ab}$ (spin 1). 
However, there are cases in which one is interested in the corresponding potentials, i.e. the metric perturbation, $h_{ab}$ (spin 2) or the perturbation of the vector potential, $A_a$ (spin 1). For spin 2, a prominent example is the self-force approach \cite{Quinn:1996am, Mino:1996nk} to extreme mass ratio inspiral. Actually, in this context, also the spin 1 equation in effect has to be solved when converting to a convenient gauge, see below.\par
Soon after Teukolsky's work \cite{Teukolsky:1973ha, Teukolsky:1972my}, it was pointed out by \cite{Chrzanowski:1975wv, Kegeles:1979an} that an equation very similar to Teukolsky's in fact also applies directly to the perturbations of the metric resp. the electromagnetic vector potential -- in a sense.
In their approach, the metric perturbation resp. vector potential are obtained by applying a certain second- resp. first order partial differential ``reconstruction'' operator, ${\mathcal S}^\dagger$, to a scalar ``pre-potential'' satisfying Teukolsky's master equation. But a major shortcoming of their method is that, while it applies straightforwardly in the source free case, it is easily seen to be inconsistent for generic sources, such as e.g. that generated by a massive resp. charged point particle. For the same reason, their method is unsuitable as it stands for calculating higher order metric perturbations, which is a practically relevant problem for the extreme mass ratio inspiral problem \cite{Pound:2017psq, Harte:2011ku, Detweiler:2011tt, Pound:2013iba, Gralla:2012db, Pound:2012dk}.\par
In a recent paper \cite{Green:2019nam} co-authored by one of us, a resolution of this problem was found in the case of spin 2. It consists in adding a very special ``corrector'' to the ansatz of CCK \cite{Chrzanowski:1975wv, Kegeles:1979an}, and is henceforth referred to as ``corrected CCK'' (3CK). This corrector is determined by simple integrations along integral curves of one of the algebraically special null fields, $l^a$, of Kerr. 
The main purpose of this paper is to present a version of this construction for spin 1. The procedure that we develop is qualitatively similar to spin 2 but somewhat simpler.\par
While the electromagnetic self-force $F^{ab}J_b$ does not require the knowledge of the vector potential but only of $F_{ab}$, the determination of the vector potential is nevertheless  of interest as a simpler toy model for the case of gravitational perturbations, and would in particular be relevant for calculating higher order self-force corrections in theories wherein charged fields couple to the vector potential\footnote{For recent progress on second order self-force, see e.g. \cite{Pound:2017psq}.}. \par
More importantly, the equation for the spin 1 potential has to be solved in effect anyway -- even in the case of gravitational perturbations alone -- when converting the spin 2 metric perturbation $h_{ab}$ obtained in a singular gauge (e.g. the radiation gauge obtained via the 3CK method of \cite{Green:2019nam}) to the Lorenz gauge. In this problem, we seek a $\xi_a$ such that $h_{ab} + \nabla_a \xi_b + \nabla_b \xi_a$ satisfies the Lorenz gauge condition. It is not hard to see that finding the explicitly required gauge vector field $\xi_a$ can be reduced to the problem of solving the sourced Maxwell equation \cite{WIP}, for which we provide a method in this work. The Lorenz gauge is of special interest because it is known to be more regular than the 3CK form of the metric \cite{Pound:2013faa}, in the sense that it contains singularities only on the worldline itself, rather than also away from the worldline. For this reason, it is expected to be important for obtaining the second order metric perturbations, as required in higher order self-force calculations, in a framework taking full advantage of the Teukolsky formalism.\par
The plan of this paper is as follows. In sec. \ref{sec:2} we present the basic equations in GHP-form \cite{Geroch:1973am}. 
In sec. \ref{sec:3}, we first analyze to what extent a general solution $A_a$ to the homogeneous Maxwell equations can be written up to gauge in CCK form on the Kerr-Newman spacetime. Our main result is that the space of CCK perturbations modulo gauge is dense in a natural sense in the space of all vector potentials having a suitable fall-off near spatial infinity. While this result is analogous to a corresponding result in our previous paper \cite{Green:2019nam}, we point out an important and non-trivial technical advance in our argument which allows us to discard any condition on the behavior of the vector potential at the horizon. Our result hinges on the Hamiltonian formulation of Maxwell's equation, the Starobinsky-Teukolsky identities, and rather non-trivial decay results obtained recently by \cite{Ma:2017yui}, see also \cite{Dafermos:2016uzj, Dafermos:2017yrz, Andersson:2019dwi} for related results.
Then we show in sec. \ref{sec:4} that the forward solution to the sourced Maxwell equations for $A_a$ can be obtained in terms of a 3CK ansatz.
The correction piece is obtained via an explicit and simple integration scheme. Finally, we briefly discuss this scheme in sec. \ref{sec:5} for the case of a point source on a geodesic. The purpose is to point out the relation between our 3CK perturbations to the ``half-string'' solutions investigated by \cite{Pound:2013faa}, and to argue rigorously that, after passing to a Lorenz gauge, the half-string singularities are absent (the same argument would apply to spin 2). 

\section{Maxwell's equations in GHP formalism}
\label{sec:2}

In this paper we study the Maxwell equation
\begin{equation}
(\mathcal{E} A)_a \equiv \nabla^b F_{ab} = 4\pi J_a
\end{equation}
on the Kerr-Newman spacetime\footnote{Most of our formalism in fact also applies to general type D metrics.}.
In order to take advantage of the algebraically special properties of the Kerr-Newman metric, a Petrov type D spacetime, it is customary to employ the Newman-Penrose, or its cousin, the Geroch-Held-Penrose (GHP) formalism \cite{Geroch:1973am}. In either formalism, the metric is encoded in a suitable null tedrad $l^a, n^a, m^a, \widebar{m}^a$. We take $l^a, n^a$ adapted to the principal null directions of the spacetime; see appendix \ref{KerrNewmanQuants} for a particular choice. The GHP formalism has a geometrical basis rendering the basic operators invariant under the remaining tetrad transformations (``gauge''). In fact, the physical fields become in this formalism sections in certain line bundles characterized by weights. For such a section we write $\Phi \circeq \left\{ p, q \right\}$ if it is a section in the line bundle characterized by the boost weight $r = \sfrac{(p+q)}{2}$ and the spin weight $s = \sfrac{(p-q)}{2}$, see e.g. \cite{Aksteiner:2010rh, ehlers} for details. The GHP calculus has all simplifications arising from the type D property built into the basic identities. These simplifications (see appendix \ref{KerrNewmanQuants}) will be exploited in many places in the sequel without further notice.

\par
Recall that Maxwell's equations are in GHP form:
\begin{subequations}
\begin{align}
\label{MaxwelleqJl}
\left( \eth' - \tau' \right) \phi_0 - \left( \thorn - 2 \rho \right) \phi_1 =& 2\pi J_l,\\
\label{MaxwelleqJn}
\left( \thorn' - 2 \rho' \right) \phi_1 - \left( \eth - \tau \right) \phi_2 =& 2\pi J_n,\\
\label{MaxwelleqJm}
\left( \thorn' - \rho' \right) \phi_0 - \left( \eth - 2 \tau \right) \phi_1 =& 2\pi J_m,\\
\label{MaxwelleqJmbar}
\left( \eth' - 2 \tau' \right) \phi_1 - \left( \thorn - \rho \right) \phi_2 =& 2\pi J_{\widebar{m}}.
\end{align}
\end{subequations}
Here, $\phi_2 \circeq \left\{ -2, 0 \right\}$, $\phi_1 \circeq \left\{ 0, 0 \right\}$ and  $\phi_0 \circeq \left\{ 2, 0 \right\}$ are the Maxwell scalars, defined as:
\begin{subequations}
\begin{equation}
\phi_2 = F_{\widebar{m} n} = \left( \eth' - \bar{\tau} \right) A_n - \left( \thorn' - \bar{\rho}' \right) A_{\widebar{m}},
\end{equation}
\begin{multline}
2 \phi_1 = F_{l n} + F_{\widebar{m} m} = - \left( \thorn' + \rho' - \bar{\rho}' \right) A_l + \left( \thorn + \rho - \bar{\rho} \right) A_n \\
+ \left( \eth' + \tau' - \bar{\tau} \right) A_m - \left( \eth + \tau - \bar{\tau}' \right) A_{\widebar{m}},
\end{multline}
\begin{equation}
\phi_0 = F_{l m} = \left( \thorn - \bar{\rho} \right) A_{m} - \left( \eth - \bar{\tau}' \right) A_l.
\end{equation}
\end{subequations}
The symbols $\thorn$, $\thorn'$, $\eth$, $\eth'$ represent suitably covariantized directional derivatives in the $l$, $n$, $m$, $\widebar{m}$-directions, and $\rho \circeq \left\{ 1, 1 \right\}$, $\rho' \circeq \left\{ -1, -1 \right\}$, $\tau \circeq \left\{ 1, -1 \right\}$, $\tau' \circeq \left\{ -1, 1 \right\} $ are the non-vanishing optical scalars in the spacetime. 
The Maxwell scalar $\phi_0$ can be viewed as defining an operator $\mathcal{T}$, given in GHP form by:
\begin{subequations}
\begin{align}
\label{Odef}
\mathcal{O} \Phi =& \left[ \left( \thorn - \bar{\rho} - 2\rho \right) \left( \thorn' - \rho' \right) - \left( \eth - \bar{\tau}' - 2\tau \right) \left( \eth' - \tau' \right) \right] \Phi,\\
\label{Tdef}
\mathcal{T} A =& \left( \eth - \bar{\tau}' \right) A_l - \left( \thorn - \bar{\rho} \right) A_m = - \phi_0,\\
\label{Sdef}
\mathcal{S} J =& \half \left[ \left( \eth - \bar{\tau}' - 2\tau \right) J_l - \left( \thorn - \bar{\rho} - 2\rho \right) J_m  \right].
\end{align}
\end{subequations}
The other operators $\mathcal O, \mathcal S$ have been defined so as to satisfy the fundamental operator identity in the Teukolsky formalism \cite{Wald:1978vm}.
\begin{equation}
\mathcal{SE} = \mathcal{OT},
\end{equation}
which implies that if $A_a$ is a solution to the Maxwell's equation, $\phi_0$ is a solution to $\mathcal{O} \phi_0 = - 4\pi \mathcal{S}J$, which is the Teukolsky form of the Maxwell's equations. Applying the adjoint of the fundamental operator equation to $\Phi \circeq \left\{ -2, 0 \right\}$, one obtains
\begin{equation}
\mathcal{E}^\dagger \mathcal{S}^\dagger \Phi = \mathcal{T}^\dagger \mathcal{O}^\dagger \Phi. 
\end{equation}
See appendix \ref{teukolskyoperators} for the concrete expressions of the adjoint operators. In particular, as observed by \cite{Chrzanowski:1975wv,Kegeles:1979an}, if $\mathcal{O}^\dagger \Phi=0$, then $A_a \equiv \Re(\mathcal{S}^\dagger \Phi)_a$ is a real-valued solution to the Maxwell's equations with vanishing source. Such a $\Phi \circeq \left\{ -2, 0 \right\}$ is called a ``Hertz potential''. A vector potential of the form $A_a = \Re(\mathcal{S}^\dagger \Phi)_a$ is said to be in ``CCK form'' after the inventors of this ansatz \cite{Chrzanowski:1975wv,Kegeles:1979an}. Note that any CCK vector potential automatically has $A_l \equiv A_a l^a = 0$. Such a gauge is referred to traditionally as {\em ``(ingoing) radiation gauge.''}

\section{Homogeneous Maxwell's equations}
\label{sec:3}

In this section, we analyze the question to what extent ``every'' solution $A_a$ to the source free Maxwell equations can be written in CCK form $A_a = 2 \Re(\mathcal{S}^\dagger \Phi)_a$, up to gauge on the Kerr-Newman spacetime, i.e. whether we can find $\Phi$ and $\chi$, with $\mathcal{O}^\dagger \Phi=0$, such that
\begin{equation}
\begin{split}
A_a =& \Re(\mathcal{S}^\dagger \Phi)_a + \nabla_a \chi \\
=& \half \Re \left\{ \left[ m_a \left( \thorn + \rho \right) - l_a \left( \eth + \tau \right) \right] \Phi \right\} + \nabla_a \chi,
\end{split}
\end{equation}
using the expression for $\mathcal{S}^\dagger$ as given in Appendix \ref{teukolskyoperators} in the second line.
Of course, we should specify what ``every'' means, i.e. regularity, fall-off at infinity etc. Roughly speaking, we will show that the set of vector potentials $A_a$ which can be written in this form are dense in a suitable norm in a vector space of distributional solutions characterized by a vanishing induced electric charge, $\dot Q$, and a falloff $o(r^{-3/2})$ at spatial infinity of the electric
field and magnetic potential. \par
As in a previous work by one of us \cite{Green:2019nam}, we will employ a Hamiltonian formalism. We consider a Cauchy surface $\Sigma$ stretching between the bifurcation surface of the black hole and spatial infinity. Any sufficiently smooth solution is specified up to gauge by its initial data
\begin{equation}
\label{pairdef}
a = (\boldsymbol{A}_a, \boldsymbol{E}^a) = ({h_a}^b A_b |_{\Sigma}, F^{b a} \nu_b |_{\Sigma}),
\end{equation}
where $\nu^a$ is the unit forward time-like normal to $\Sigma$ and where $h_{ab} = g_{ab}-\nu_a\nu_b$ the induced metric. The first entry is the magnetic potential and the second the electric field on $\Sigma$. As usual, we have the Gauss constraint $D_a \boldsymbol{E}^a = 0$, where $D_a$ is the covariant derivative of $h_{ab}$. We also need to equip the space of solutions with a topology. The simplest conceivable one is $L^2\times L^2$ space with the the inner product,
\begin{equation}
\label{innerproduct}
\left\langle a, a' \right\rangle_\Sigma = - \int_\Sigma \sqrt{h} (\boldsymbol{A}_a \boldsymbol{A}'^{a} + \boldsymbol{E}_a \boldsymbol{E}'^{a}) .
\end{equation}
In addition  we equip our space with the usual symplectic form $\Omega$ :
\begin{equation}
\label{maxwellsymplecticform}
\Omega[a, a']  = \int_\Sigma \sqrt{h} ( \boldsymbol{A}^a \boldsymbol{E}'_a - \boldsymbol{E}^a \boldsymbol{A}'_a ).
\end{equation}
$\Omega$ induces a linear anti-hermitian unitary operator $\tilde{\Omega}$ on the $L^2\times L^2$. 
From that fact, it is very easy to prove the following elementary facts:

\begin{lemma}
Let $\mathscr{A}$ and $\mathscr{B}$ be linear subspaces of $L^2 \times L^2$, $\mathscr{A}^\perp$ and $\mathscr{B}^\perp$ be symplectic orthogonal complements of $\mathscr{A}$ and $\mathscr{B}$, and $\widebar{\mathscr{A}}$ the closure of $\mathscr{A}$ w.r.t. inner product. Then:

\begin{enumerate}
\item[i.]  $\widebar{\mathscr{A}}^\perp = \mathscr{A}^\perp$,
\item[ii.\label{secondpartoffirsttheorem}] $\mathscr{A}^\perp$ is a closed subspace,
\item[iii.] ${\mathscr{A}^\perp}^\perp = \widebar{\mathscr{A}}$, and
\item[iv.] $\left( \mathscr{A} + \mathscr{B} \right)^\perp = \mathscr{A}^\perp \cap \mathscr{B}^\perp$.
\end{enumerate}
\end{lemma}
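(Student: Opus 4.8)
The plan is to use $\tilde{\Omega}$ as a dictionary that converts the symplectic pairing into the inner product \eqref{innerproduct}, thereby reducing all four claims to standard facts about ordinary orthogonal complements in a Hilbert space. Throughout, write $S^{\perp_H}$ for the orthogonal complement of a set $S$ with respect to $\langle\cdot,\cdot\rangle$, reserving $S^\perp$ for the symplectic complement. The operator $\tilde{\Omega}$ is characterised by $\Omega[a,a']=\langle a,\tilde{\Omega}a'\rangle$, and being unitary and anti-hermitian it satisfies $\tilde{\Omega}^{-1}=\tilde{\Omega}^{*}=-\tilde{\Omega}$, whence $\tilde{\Omega}^2=-\mathbb{1}$. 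The key identity I would establish first is
\begin{equation}
\mathscr{A}^\perp=\{b:\langle b,\tilde{\Omega}a\rangle=0\ \text{for all}\ a\in\mathscr{A}\}=(\tilde{\Omega}\mathscr{A})^{\perp_H}=\tilde{\Omega}\big(\mathscr{A}^{\perp_H}\big),
\end{equation}
the last equality holding because a unitary operator intertwines the taking of orthogonal complements, i.e. $(\tilde{\Omega}S)^{\perp_H}=\tilde{\Omega}(S^{\perp_H})$.

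With this in hand, part i follows since $\tilde{\Omega}$ is bounded, so each $\Omega[\,\cdot\,,a]$ is a continuous linear functional whose kernel, if it contains $\mathscr{A}$, contains $\widebar{\mathscr{A}}$ as well; together with the trivial inclusion $\widebar{\mathscr{A}}^\perp\subseteq\mathscr{A}^\perp$ this gives equality. (Equivalently, insert the standard fact $S^{\perp_H}=\widebar{S}^{\perp_H}$ into the displayed formula.) Part ii is then immediate: $\tilde{\Omega}(\mathscr{A}^{\perp_H})$ is the image under the homeomorphism $\tilde{\Omega}$ of an ordinary orthogonal complement, which is closed, so $\mathscr{A}^\perp$ is closed; alternatively, it is an intersection of kernels of continuous functionals.

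Part iii is the only step with genuine content, and it is where I would lean on the Hilbert-space projection theorem. Applying the displayed identity twice,
\begin{equation}
{\mathscr{A}^\perp}^\perp=\tilde{\Omega}\big((\tilde{\Omega}\mathscr{A}^{\perp_H})^{\perp_H}\big)=\tilde{\Omega}^2\big((\mathscr{A}^{\perp_H})^{\perp_H}\big)=\tilde{\Omega}^2\,\widebar{\mathscr{A}},
\end{equation}
where the double-complement theorem $(\mathscr{A}^{\perp_H})^{\perp_H}=\widebar{\mathscr{A}}$ is used in the last step. Since $\tilde{\Omega}^2=-\mathbb{1}$ and $\widebar{\mathscr{A}}$ is a linear subspace, $\tilde{\Omega}^2\,\widebar{\mathscr{A}}=-\widebar{\mathscr{A}}=\widebar{\mathscr{A}}$, giving ${\mathscr{A}^\perp}^\perp=\widebar{\mathscr{A}}$.

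Finally, part iv needs no topology and follows from bilinearity of $\Omega$ alone: $b\in(\mathscr{A}+\mathscr{B})^\perp$ iff $\Omega[b,a+a']=0$ for all $a\in\mathscr{A}$, $a'\in\mathscr{B}$; setting $a'=0$ and $a=0$ in turn shows $b\in\mathscr{A}^\perp\cap\mathscr{B}^\perp$, and the converse is clear from $\Omega[b,a+a']=\Omega[b,a]+\Omega[b,a']$. The only point requiring care anywhere is the bookkeeping in the dictionary identity — in particular the intertwining relation and the sign $\tilde{\Omega}^2=-\mathbb{1}$ — after which each of the four assertions reduces to a one-line consequence of familiar Hilbert-space theory.
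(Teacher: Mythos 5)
Your proof is correct and follows exactly the route the paper intends: the paper omits an explicit argument, remarking only that the claims follow easily from the fact that $\Omega$ induces a linear anti-hermitian unitary operator $\tilde{\Omega}$ on $L^2\times L^2$, and your dictionary identity $\mathscr{A}^\perp=\tilde{\Omega}\bigl(\mathscr{A}^{\perp_H}\bigr)$ together with $\tilde{\Omega}^2=-\mathbb{1}$ is precisely the intended reduction to standard Hilbert-space facts. No gaps; the bookkeeping of the intertwining relation and the sign is handled correctly.
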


Following \cite{Green:2019nam}, which builds on \cite{Hollands:2012sf, Prabhu:2018jvy}, we now introduce certain subspaces of $L^2 \times L^2$. Let $\mathscr{W}$ be the subspace of all gauge perturbations, $A_a = \nabla_a \chi$ generated by a smooth function $\chi$ that is constant in a neighborhood of spacial infinity $i^0$ and on the bifurcation surface $\mathscr{S}$. In addition, let $\mathscr{V}$ be the symplectic orthogonal complement of $\mathscr{W}$, i.e., $\mathscr{V} = \mathscr{W}^\perp$. We can introduce within $\mathscr{V}$ the subspace $\mathscr{V} \cap \mathscr{U}$, where $\mathscr{U} \subset C^\infty \times C^\infty$ is a certain intersection of weighted Sobolev spaces. As in section 4.1 of \cite{Green:2019nam} one may show that: (i) Initial data from $\mathscr{V}$ satisfy the constraint $D_a \boldsymbol{E}^a=0$ in the distributional sense, (ii) have $\dot{Q} = 0$, where
\begin{equation}
\dot{Q} = \frac{1}{2\pi} \int_{i^0} \star F
\end{equation}
is the charge carried by $A_a$, (iii) $\mathscr{U} \cap \mathscr{V}$ is dense in the closed subspace $\mathscr{V}$, and (iv) satisfy $\boldsymbol{A}_a, \boldsymbol{E}^a = o(r^{-3/2})$ in an asymptotically Cartesian coordinates system as $r \to \infty$, with derivatives falling off faster by corresponding powers of $r^{-1}$.\par
Using our gauge freedom, we could impose that the electro-static potential vanishes at the horizon, or more precisely that $A_n = A_l = 0$ on $\mathscr{S}$, but $A_m, A_{\widebar{m}}$ cannot be set to any particular values on $\mathscr{S}$. Furthermore, by an analog of the density result (\ref{homogeneoustheorem}), which uses gluing methods, we may replace $\mathscr{U}$ by the space of smooth pairs $(\boldsymbol{A}_a, \boldsymbol{E}^a)$ that vanish outside a large ball.\par
Finally, let $\mathscr{Y}$ be the subspace of potentials of the form $\Re \left\{ \mathcal{S}^\dagger \Phi \right\}$ where $\Phi$ is a smooth GHP scalar of weight $\left\{-2, 0 \right\}$ such that, $\mathcal{O}^\dagger \Phi = 0$, and such that $\Phi=0=\dot \Phi$ outside a large ball on $\Sigma$. By definition, every $A_a$ in $\mathscr{Y}$ has vanishing electric charge $\dot Q=0$. The definitions imply that
\begin{equation}
\mathscr{Y} \subset \mathscr{V}.
\end{equation}
If $A_a$ is a distributional solution to the homogeneous Maxwell's equation, then we say $A_a \in \mathscr{V}$ if this is true for their initial data, etc. Momentarily, we will also show that 
\begin{equation}
\left( \mathscr{U} \cap \mathscr{V} \right) \cap \mathscr{Y}^\perp \subseteq \mathscr{W},
\end{equation}
at least for sufficiently slowly rotating Kerr black holes.
This gives 
\begin{multline}
\mathscr{U} \cap \left( \mathscr{W} + \mathscr{Y} \right)^\perp = \left( \mathscr{U} \cap \mathscr{V} \right) \cap \mathscr{Y}^\perp \subseteq \mathscr{W} \Longrightarrow \mathscr{V} = \mathscr{W}^\perp \subseteq \left( \left( \mathscr{W} + \mathscr{Y} \right)^\perp \cap \mathscr{U} \right)^\perp \\
= \left( \overline{\left( \mathscr{W} + \mathscr{Y} \right)^\perp \cap \mathscr{U}} \right)^\perp = \left( \left( \mathscr{W} + \mathscr{Y} \right)^\perp \right)^\perp = \overline{\mathscr{W} + \mathscr{Y}}.
\end{multline}
In other words, $\mathscr{V} \subseteq \overline{\mathscr{W} + \mathscr{Y}}$. Therefore, we have the following density result:

\begin{theorem}
\label{homogeneoustheorem}
Any distributional solution $A_a$ to the homogeneous Maxwell's equations on the exterior of the Kerr black hole with $a \ll M$, with initial in the space $\mathscr{V}$ can be approximated to arbitrary precision $\epsilon>0$ by a solution of the form $\Re(\mathcal{S}^\dagger \Phi)_a + \nabla_a \chi$, where $\chi$ is smooth, constant near infinity, and vanishing at the bifurcation surface, and where $\Phi$ is a Hertz-potential, $\mathcal{O}^\dagger \Phi = 0$, with smooth initial data that vanish near the horizon and near infinity, in the sense that
\begin{equation}
\norm{A - {\rm d} \chi -  \Re ( \mathcal{S}^\dagger \Phi ) }_\Sigma < \epsilon,
\end{equation}
in which $\norm{.}_\Sigma$ denotes the $L^2(\Sigma \times \Sigma)$ norm on initial data \eqref{innerproduct}.
\end{theorem}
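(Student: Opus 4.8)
The plan is to isolate the one non-trivial inclusion $(\mathscr{U}\cap\mathscr{V})\cap\mathscr{Y}^\perp\subseteq\mathscr{W}$ and to deduce the theorem from it. Granting this inclusion, the chain of equalities displayed just above the statement---which invokes only Lemma~1 together with $\mathscr{V}=\mathscr{W}^\perp$---gives $\mathscr{V}\subseteq\overline{\mathscr{W}+\mathscr{Y}}$, and this is exactly the assertion that any $A_a$ with data in $\mathscr{V}$ can be approximated in the norm \eqref{innerproduct} by $\mathrm{d}\chi+\Re(\mathcal{S}^\dagger\Phi)$. I would therefore spend no effort on the abstract step and concentrate entirely on the inclusion.

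So let $a=(\boldsymbol{A}_a,\boldsymbol{E}^a)\in(\mathscr{U}\cap\mathscr{V})\cap\mathscr{Y}^\perp$ generate the homogeneous solution $A_a$. By definition of $\mathscr{Y}^\perp$ one has $\Omega[a,a']=0$ for the data $a'$ of every CCK solution $\Re(\mathcal{S}^\dagger\Phi)$ with $\mathcal{O}^\dagger\Phi=0$ and Hertz data supported away from the ends of $\Sigma$. The central step is to recast this as a statement about the Teukolsky scalar $\phi_0[A]=-\mathcal{T}A$. Since $A_a$ and $\Re(\mathcal{S}^\dagger\Phi)$ are both source-free, $\Omega$ is slice-independent; I would move all derivatives off $A_a$ and onto $\Phi$ by a Green-type integration by parts in the region between two slices, using $\mathcal{SE}=\mathcal{OT}$, its adjoint $\mathcal{E}^\dagger\mathcal{S}^\dagger=\mathcal{T}^\dagger\mathcal{O}^\dagger$, and $\mathcal{O}^\dagger\Phi=0$. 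A weight count ($\Phi\circeq\{-2,0\}$, $\phi_0\circeq\{2,0\}$) shows the surviving bulk integrand is the weight-zero product $\Phi\,\phi_0[A]$, so that $\Omega[a,a']$ equals, up to boundary contributions at the horizon $\mathscr{H}$ and at spatial and null infinity, a conserved bilinear Green pairing of $\Phi$ against $\phi_0[A]$.

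Granting that these boundary terms vanish, the vanishing of the pairing for a family of Hertz potentials with dense initial data forces $\phi_0[A]=0$ on $\Sigma$, hence everywhere by Teukolsky evolution. The Starobinsky--Teukolsky identities, which relate $\phi_0$ and $\phi_2$ of any source-free Maxwell field through second-order GHP operators and the associated Teukolsky--Starobinsky constant, then give $\phi_2[A]=0$ as well, once the anomalous (algebraically special) modes on which that constant degenerates are excluded---here the hypothesis $a\ll M$ enters. With both extreme scalars gone, the homogeneous equations \eqref{MaxwelleqJl}--\eqref{MaxwelleqJmbar} collapse to $(\thorn-2\rho)\phi_1=(\eth-2\tau)\phi_1=(\thorn'-2\rho')\phi_1=(\eth'-2\tau')\phi_1=0$, whose solutions form the one-dimensional family of Coulomb modes carrying electric and magnetic charge. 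Membership in $\mathscr{V}$ forces $\dot{Q}=0$, and the global existence of the potential (so that $F=\mathrm{d}A$) excludes magnetic charge; hence $\phi_1=0$, $F_{ab}=0$, and $A_a=\nabla_a\chi$. Using the $o(r^{-3/2})$ fall-off built into $\mathscr{U}$ and the freedom to shift $\chi$ by constants, one verifies $\chi$ may be taken constant near $i^0$ and on $\mathscr{S}$, so $a\in\mathscr{W}$.

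The main obstacle is the bracketed hypothesis of the previous paragraph: the boundary terms produced by the integration by parts must genuinely vanish. The contribution at spatial and null infinity is controlled by the $o(r^{-3/2})$ decay of the data and the fall-off of $\phi_0[A]$ and $\Phi$, but the contribution at $\mathscr{H}$ is delicate, since the Hertz data are compactly supported away from $\mathscr{H}$ only on the initial slice while $\Phi$ spreads onto $\mathscr{H}$ under evolution. Rather than imposing an a priori boundary condition on $A_a$ at $\mathscr{H}$---as was needed for spin $2$ in \cite{Green:2019nam}---I would control this term directly through the uniform decay estimates for Teukolsky fields of \cite{Ma:2017yui}, available precisely in the slowly rotating regime $a\ll M$; this is the technical advance advertised in the introduction. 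Supplying the required density of admissible Hertz data (equivalently, solvability of $\mathcal{O}^\dagger\Phi=0$ for a dense set of compactly supported data) is comparatively routine given well-posedness of the Teukolsky equation, but must be arranged compatibly with these estimates.
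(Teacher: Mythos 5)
Your reduction to the inclusion $(\mathscr{U}\cap\mathscr{V})\cap\mathscr{Y}^\perp\subseteq\mathscr{W}$ and the overall plan ($\phi_0=0$, then $\phi_2=0$ via Starobinsky--Teukolsky, then $\phi_1=0$ via the charges, then pure gauge) match the paper. But two of your key mechanisms are misplaced. First, the integration by parts is performed on the single slice $\Sigma$, producing the identity $\int_\Sigma \star w[A,\mathcal{S}^\dagger\Phi]=\int_\Sigma\star\pi[\mathcal{T}A,\Phi]+\int_{\mathscr{S}}\star H[A,\Phi]$, and the boundary term at the bifurcation surface $\mathscr{S}$ is \emph{not} required to vanish --- it is kept and exploited. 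The bulk $\pi$-term alone, with $\Phi,\thorn\Phi$ freely specifiable in the interior of $\Sigma$, already forces $\phi_0=0$; and at the very end, once $F_{ab}=0$ is known, the same identity collapses to $\tfrac{1}{2}\Im\int_{\mathscr{S}}A_m(\thorn+\rho)\Phi\,m\wedge\widebar{m}=0$, which is precisely what yields $A_m=A_{\widebar{m}}=0$ on $\mathscr{S}$ and hence that $\chi$ is constant on the bifurcation surface. This is the advertised technical advance, and it is the step your proposal leaves unjustified: your ``freedom to shift $\chi$ by constants'' handles $i^0$ but gives nothing on $\mathscr{S}$, so you cannot conclude $a\in\mathscr{W}$. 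Arranging for the horizon boundary term to vanish, as you propose, would destroy exactly the information needed there.

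Second, the decay estimates of \cite{Ma:2017yui} and the hypothesis $a\ll M$ are not used to control horizon boundary terms, nor to ``exclude algebraically special modes''. They enter in the $\phi_2$ step: they guarantee that $\phi_2$, hence the Held scalars $\alpha^\circ_i$ in $\rho^{-2}\phi_2=\alpha^\circ_0+\rho^{-1}\alpha^\circ_1$, is an $L^2$ function of $u$ on radial strips, which is what licenses the Fourier transform and the expansion in spin-weighted harmonics. The zeros of the Starobinsky constant are then handled for arbitrary rotation by analyticity of the separation constant $\lambda(\omega)$ (Sturm--Liouville theory): the constant vanishes only at isolated $\omega$, so each $\hat\alpha_{ilm}$ is a distribution supported on isolated points, and an $L^2$ function with that property is zero. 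Without the $L^2$-in-$u$ statement this argument has no starting point, and that is the only place where $a\ll M$ is needed; your version of this step, resting on an unexplained exclusion of degenerate modes, does not actually close.
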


\noindent
\textbf{Remark:} The condition $a \ll M$ is necessary in order to apply the decay results of \cite{Ma:2017yui}. If these results could be demonstrated for all $a<M$, or for Kerr-Newman, then the above theorem would hold accordingly.

\medskip
\noindent
Thus we need:
\begin{lemma}
Under the conditions of the theorem, $\left( \mathscr{U} \cap \mathscr{V} \right) \cap \mathscr{Y}^\perp \subset \mathscr{W}$.
\end{lemma}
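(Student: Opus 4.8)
The plan is to show that symplectic orthogonality to $\mathscr{Y}$ forces the extreme Maxwell scalar of $A_a$ to vanish, and that this, combined with the vanishing charge built into membership in $\mathscr{V}$, makes $A_a$ pure gauge. First I would unpack the hypotheses: an element $a\in(\mathscr{U}\cap\mathscr{V})\cap\mathscr{Y}^\perp$ is the initial datum of a smooth homogeneous solution $A_a$ with the falloff of item (iv) and with $\dot Q=0$, and it satisfies $\Omega[A,\Re(\mathcal{S}^\dagger\Phi)]=0$ for every Hertz potential $\Phi$ of weight $\{-2,0\}$ with $\mathcal{O}^\dagger\Phi=0$ whose data vanish near the horizon and near infinity. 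The goal $a\in\mathscr{W}$ amounts to showing that $A_a$ is a gauge perturbation.

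The central step is to convert this symplectic condition into a condition on the Teukolsky scalar $\phi_0=-\mathcal{T}A$. Since both $A_a$ and $\Re(\mathcal{S}^\dagger\Phi)_a$ solve $\mathcal{E}A=0$, the Maxwell symplectic current is conserved, so $\Omega_\Sigma$ may be deformed freely between Cauchy slices and slid towards the boundaries. Using the Wald operator identity $\mathcal{S}\mathcal{E}=\mathcal{O}\mathcal{T}$ together with its adjoint $\mathcal{E}^\dagger\mathcal{S}^\dagger=\mathcal{T}^\dagger\mathcal{O}^\dagger$ and repeated integration by parts (the Green identities for $\mathcal{O}$, $\mathcal{S}$, $\mathcal{T}$ relative to $\mathcal{E}$), I would derive a bilinear identity of the schematic form $\Omega_\Sigma[A,\Re(\mathcal{S}^\dagger\Phi)]=\Re\,\mathcal{B}_{\partial}[\phi_0,\Phi]$, in which $\mathcal{B}_{\partial}$ is a conserved ``Teukolsky flux'' bilinear in $\Phi$, $\phi_0$ and their transverse derivatives evaluated on a cut. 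Every bulk term carries a factor $\mathcal{E}A=0$ or $\mathcal{O}^\dagger\Phi=0$, so only this boundary pairing survives.

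I would then slide the conserved flux onto $\partial\Sigma$, which consists of the sphere at spatial infinity and the bifurcation surface $\mathscr{S}$. The contribution at the horizon is the main obstacle, and it is precisely here that the decay estimates of \cite{Ma:2017yui} enter: because the Hertz data defining $\Phi$ are supported away from $\mathscr{S}$, and because of the assumed $o(r^{-3/2})$ falloff of $a$ together with the quantitative decay of the reconstructed field, the bifurcation-surface term is forced to vanish without imposing any a priori condition on $A_a$ there. This is the technical advance advertised in the remark and is the step I expect to be genuinely delicate. What remains is a pairing at infinity of $\phi_0$ against the radiative data of $\Phi$; invoking the Starobinsky--Teukolsky identities to relate admissible Hertz data to Teukolsky data, and the density of such data, the vanishing of this pairing for all $\Phi$ forces $\phi_0\equiv 0$.

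Finally I would close the argument by the spin-1 rigidity statement. With $\phi_0\equiv 0$ and $J=0$, the equations \eqref{MaxwelleqJl} and \eqref{MaxwelleqJm} reduce to $(\thorn-2\rho)\phi_1=0=(\eth-2\tau)\phi_1$, while the Starobinsky--Teukolsky identity links $\phi_2$ to $\phi_0$ and thereby confines $\phi_2$ to the kernel of a fixed operator. The residual freedom is then exactly the stationary charge-carrying (Coulomb) mode, whose coefficient is fixed by $\dot Q$. Since membership in $\mathscr{V}$ guarantees $\dot Q=0$, the $o(r^{-3/2})$ falloff leaves only the trivial solution, so $\phi_1=\phi_2=0$ and hence $F_{ab}=0$; thus $A_a$ is closed. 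Matching the boundary behavior already arranged in passing to $\mathscr{V}$ (constancy near $i^0$ and at $\mathscr{S}$) exhibits $A_a=\nabla_a\chi$ with $\chi$ of the admissible type, i.e. $a\in\mathscr{W}$. I expect the horizon flux estimate to be the real difficulty, the operator-identity bookkeeping and the concluding rigidity being comparatively routine adaptations of the spin-2 treatment in \cite{Green:2019nam}.
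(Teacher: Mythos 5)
Your overall skeleton (kill $\phi_0$ using symplectic orthogonality to $\mathscr{Y}$, then kill $\phi_2$ and $\phi_1$, then conclude pure gauge) matches the paper, but the mechanisms you assign to the key steps are wrong in two places, and a third step is missing.

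First, the reduction of $\Omega[A,\Re(\mathcal{S}^\dagger\Phi)]$ does \emph{not} collapse to boundary pairings at infinity and the horizon. The identity actually used is $\int_\Sigma \star w[A,\mathcal{S}^\dagger\Phi]=\int_\Sigma \star\pi[\mathcal{T}A,\Phi]+\int_{\mathscr{S}}\star H[A,\Phi]$: the surviving term is a \emph{bulk} integral over $\Sigma$ of the Teukolsky symplectic current pairing the initial data of $\phi_0=-\mathcal{T}A$ against $(\Phi,\thorn\Phi)$, plus one corner term at $\mathscr{S}$. Since $\Phi$ and $\thorn\Phi$ are freely prescribable on $\Sigma$, this bulk term vanishing for all admissible $\Phi$ forces the data of $\phi_0$ to vanish outright; no flux-sliding and no decay estimates are needed here. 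Relatedly, the decay results of \cite{Ma:2017yui} do not control a horizon flux in this step; they enter later, in the $\phi_2$ argument, to guarantee that $\phi_2$ is $L^2$ in $u$ so that a Fourier transform is legitimate.

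Second, your treatment of $\phi_2$ is where the genuine difficulty lies and your proposal skips it. With $\phi_0=0$ the Starobinsky--Teukolsky identities give $\phi_2/\rho^2=\alpha^\circ_0+\alpha^\circ_1/\rho$ with $\tilde{\eth}^2\alpha^\circ_i=0$; this kernel is a priori infinite-dimensional (not ``exactly the Coulomb mode'' --- that mode lives in $\phi_1$), and one must Fourier transform in $u$, expand in spin-weighted harmonics, and argue that the Starobinsky constant vanishes only at isolated frequencies, so an $L^2$ coefficient supported there is zero. Without the $L^2$-in-$u$ input from \cite{Ma:2017yui} this argument does not close. Finally, after $F_{ab}=0$ you still must show $\chi$ is constant on the bifurcation surface; this does \emph{not} follow from membership in $\mathscr{V}$ (the paper notes $A_m,A_{\widebar{m}}$ cannot be gauged away on $\mathscr{S}$), but requires returning to the corner term $\int_{\mathscr{S}}\star H$, which with $\phi_0=\phi_1=0$ reduces to $\tfrac12\Im\int_{\mathscr{S}}A_m(\thorn+\rho)\Phi\, m\wedge\widebar{m}=0$ and, by the freedom in $\Phi$ on $\mathscr{S}$, yields $A_m=A_{\widebar{m}}=0$ there.
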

\begin{proof}
Let $A_a$ be some potential in $\left( \mathscr{U} \cap \mathscr{V} \right) \cap \mathscr{Y}^\perp$. Then by construction $A_a$ is a smooth solution to the homogeneous Maxwell equations, has vanishing initial data near infinity, and for any smooth Hertz potential $\Phi$ with vanishing  initial data near infinity, 
\begin{equation}
\label{eq:1}
\Omega[A, \Re(\mathcal{S}^\dagger \Phi)] = 0. 
\end{equation}
We must show that $A_a = \nabla_a \chi$ with $\chi$ vanishing near $i^0$ and constant on $\mathscr{S}$, so in particular 
$A_m=A_{\widebar{m}}=0$ on $\mathscr{S}$. Our strategy for showing this is as follows. 
First, we show $\phi_0=0$. Then, we argue that also $\phi_2=0$, and then that $\phi_1=0$, which shows that $A_a$ is pure gauge. 
Finally, we demonstrate the properties of the gauge function, $\chi$. \par
First, note that \eqref{eq:1} holds also for $i\Phi$, and since $A_a$ is real, we may drop the ``Re''. 
By the formulas \eqref{pivalue}, \eqref{wvalue}, \eqref{wsigmapitidentity}, \eqref{starH}, we find
\begin{equation}
\label{eq:2}
\int_\Sigma \star w[A, \mathcal{S}^\dagger \Phi] = \int_\Sigma \star \pi[\mathcal{T} A, \Phi] + \int_{\mathscr{S}} \star H[A, \Phi].
\end{equation}
Furthermore,
\begin{equation}
\label{eq:3}
\Omega[A, A'] = \int_\Sigma \star w[A,A'],  
\end{equation}
for any pair of solutions $A_a, A'_a$, in particular $A_a' = (\mathcal{S}^\dagger \Phi)_a$. This must hold for an arbitrary choice of $\Phi, \thorn \Phi$ on $\Sigma$. Therefore, the definition \eqref{pivalue} implies that the initial data for  $\mathcal{T} A = - \phi_0 = 0$ on $\Sigma$ are trivial. Further, given that $\mathcal{O} \phi_0 = 0$, we conclude that $\phi_0 = 0$ on the whole exterior region $\mathscr{M}$ in view of the usual existence and uniqueness properties of solutions to hyperbolic partial differential equations such as these.\par
We next implement the Starobinsky-Teukolsky identities \eqref{starobinskyidentities} with the intention to show that also $\phi_2 = 0$.
\begin{subequations}
\label{starobinskyidentities}
\begin{align}
\label{starobinskyidentities1}
\left( \thorn - 2\rho \right)^2 \phi_2 =& \left( \eth' - 2\tau' \right)^2 \phi_0,\\
\label{starobinskyidentities2}
\left( \eth - 2\tau \right)^2 \phi_2 =& \left( \thorn' - 2\rho' \right)^2 \phi_0.
\end{align}
\end{subequations}
By setting $\phi_0$ to zero in \eqref{starobinskyidentities1} we learn using the identity $\thorn \rho = \rho^2$ that
\begin{equation}
\label{eq:ph2}
\frac{1}{\rho^2} \phi_2 = \alpha^\circ_0 + \frac{1}{\rho} \alpha^\circ_1. 
\end{equation}
Here, a degree mark as in $\alpha^\circ_i$ means that a GHP scalar is annihilated by $\thorn$. To deal with such objects efficiently, it is convenient to use a calculus introduced by Held \cite{held1974formalism, held1975formalism}. Held's operators are on weight $\left\{ p, q \right\}$ GHP quantities,
\begin{equation}
\label{eq:Hops}
\begin{split}
\tilde{\thorn}' &=\thorn' - \bar{\tau} \tilde{\eth} - \tau \tilde{\eth}' + \tau \bar{\tau} \left( \frac{q}{\bar \rho} + \frac{p}{\rho} \right) + \half \left( \frac{q \bar \Psi_2}{\bar \rho} + \frac{p \Psi_2}{\rho} \right), \\
\tilde{\eth} &=\frac{1}{\bar \rho} \eth + \frac{q\tau}{\rho}, \\
\tilde{\eth}' &= \frac{1}{\rho} \eth' + \frac{p\bar{\tau}}{\bar \rho}.
\end{split}
\end{equation} 
The key feature that makes these operators useful is that they commute through $\thorn$ when acting on any GHP scalar $\alpha^\circ$, so e.g. $\left[ \thorn, \tilde{\eth} \right] \alpha^\circ = 0$. In the Kinnersley frame and advanced Kerr-Newman coordinates, see appendix \ref{KerrNewmanQuants}, the Held operators $\tilde{\eth}, \tilde{\eth}'$ and $\tilde{\thorn}'$ are
\begin{equation}
\label{eq:Hop}
\begin{split}
\tilde{\eth} & = -\frac{1}{\sqrt{2}} \left( \frac{\partial}{\partial \theta} + i \csc \theta \frac{\partial}{\partial \varphi} + ia \sin \theta \frac{\partial}{\partial u} - \half (p-q) \cot \theta \right) , \\
\tilde{\eth}' & = -\frac{1}{\sqrt{2}} \left( \frac{\partial}{\partial \theta} - i \csc \theta \frac{\partial}{\partial \varphi} - ia \sin \theta \frac{\partial}{\partial u} + \half (p-q) \cot \theta \right), \\
\tilde{\thorn}' & = \frac{\partial}{\partial u} ,
\end{split}
\end{equation}
and when acting on a ``mode'' in this frame they are basically equal to Chandrasekhar's operators $\mathcal{L}_s, \mathcal{L}_s^\dagger$, with $s=\sfrac{(p-q)}{2}$, see \cite{Chandrasekhar:1984siy}. On the other hand, by the Starobinsky-Teukolsky identities \eqref{starobinskyidentities2} and $\left[ \thorn, \tilde{\eth} \right] \alpha^\circ_1 = 0$ we have,
\begin{equation}
\label{relationsforalphas}
\tilde{\eth}^2 (\frac{1}{\rho^2} \phi_2) = 0 \Longrightarrow \tilde{\eth}^2 \alpha^\circ_0 = \tilde{\eth}^2 \alpha^\circ_1 = 0.
\end{equation}
The results by \cite{Ma:2017yui} next show that $\phi_2$ is an $L^2$-function of $u$ in any strip $r \in (r_0, r_1)$ in the exterior. This and the special form of $\phi_2$, \eqref{eq:ph2} implies that $\alpha^\circ_i(u, \theta, \varphi^*)$ are also $L^2$-functions of $u$. Then we can safely take Fourier-transforms of $\alpha^\circ_i$ in $L^2$ and expand in terms of spin-weighted harmonics $\swss{\pm1}{a \omega}(\theta)$ (see \cite{Chandrasekhar:1984siy} for details), to obtain
\begin{equation}
\alpha^\circ_i = \sum_{l,m} \int_{-\infty}^\infty {\rm d} \omega \; \hat{\alpha}_{i l m}(\omega) \swss{-1}{a \omega}(\theta) e^{-i\omega u + im \varphi}.
\end{equation}
Then equations \eqref{relationsforalphas} are equivalent to
\begin{equation}
\begin{split}
2 \tilde{\eth}^2 \alpha^\circ_i =& \sum_{l, m} \int_{-\infty}^\infty{\rm d} \omega \; \hat{\alpha}_{i l m}(\omega) \, \mathcal{L}^\dagger_0 \mathcal{L}^\dagger_1 \swss{-1}{a \omega}(\theta) e^{-i\omega u + im \varphi} \\
=& \sum_{l, m} \int_{-\infty}^\infty {\rm d} \omega \; \hat{\alpha}_{i l m}(\omega) \swsc{+1}{a \omega} \swss{+1}{a \omega}(\theta) e^{-i\omega u + im \varphi} = 0.
\end{split}
\end{equation}
Here, $\mathcal{L}^\dagger_0, \mathcal{L}^\dagger_1$ are the operators given in \cite{Chandrasekhar:1984siy}, $\lambda = \swsl{+1}{a \omega}$ is the separation constant in the angular equation for $\swss{\pm1}{a \omega}(\theta)$, and $\swsc{+1}{\nu}$ is (see \cite{Chandrasekhar:1984siy}).
\begin{equation}
\label{starobinskyconstantdef}
\swsc{+1}{\nu} = (\lambda + 1)^2 + 2 (\lambda - 1) \nu^2 + \nu^4 + 4 \nu m.
\end{equation}
The reader should note that we adopted the definition of \cite{daCosta:2019muf} for spin-weighted harmonics and constants in this article. Hence they are slightly different from those of Chandrasekhar \cite{Chandrasekhar:1984siy}.\par
The zeros of \eqref{starobinskyconstantdef} can arise only from the functional dependence of the separation constant $\lambda = \swsl{+1}{a \omega}$
on $\omega$, and $\lambda$ is fixed by the angular Teukolsky equation. This is a Sturm-Liouville type equation, and therefore the dependence on $\omega$ has to be analytic. Hence, the Starobinsky constant can vanish only at most a set of isolated points and so $\hat{\alpha}_{i l m}(\omega)=0$ except at those points, i.e. it must be a distribution that is a sum of delta-functions and its derivatives. Further, since  $\hat{\alpha}_{i l m}(\omega)$ belongs to $L^2$ w.r.t. $\omega$, this means that actually $\hat{\alpha}_{i l m}(\omega)=0$ everywhere as a distribution. Therefore, taking the inverse Fourier transform, we see that $\phi_2$ vanishes in addition to $\phi_0$.\par
Next, we intend to show that $\phi_1=0$ also. Using \eqref{MaxwelleqJl}, $\thorn \rho=\rho^2$ and that $\phi_0=\phi_2=0$ we obtain the following form for $\phi_1$:
\begin{equation}
\phi_1 = \beta^\circ \rho^2,
\end{equation}
where $\beta^\circ$ is annihilated by $\thorn$ and $\beta^\circ \circeq \left\{ -2, -2 \right\}$. By implementing the equations \eqref{MaxwelleqJm} and \eqref{MaxwelleqJmbar} one has similarly
\begin{equation}
\left( \eth - 2 \tau \right) \phi_1 = \left( \eth - 2 \tau \right) ( \rho^2 \beta^\circ ) = \rho^2 \left( \eth - 2 \frac{\bar{\rho} \tau}{\rho} \right)\beta^\circ = \bar{\rho} \rho^2 \tilde{\eth} \beta^\circ = 0,
\end{equation}
and
\begin{equation}
\left( \eth' - 2 \tau' \right) \phi_1 = \left( \eth' - 2 \tau' \right) ( \rho^2 \beta^\circ ) = \rho^2 \left( \eth' - 2 \frac{\rho \bar{\tau}}{\bar{\rho}} \right)\beta^\circ = \rho^3 \tilde{\eth}' \beta^\circ = 0.
\end{equation}
Therefore, $\tilde{\eth} \beta^\circ = \tilde{\eth}' \beta^\circ = 0$. On the other hand by equation by \eqref{MaxwelleqJmbar} we have
\begin{equation}
\left( \thorn' - 2 \rho' \right) \phi_1 = \left( \thorn' - 2 \rho' \right) \left( \rho^2 \beta^\circ \right) = \rho^2 \left( \bar{\tau} \tilde{\eth} + \tau \tilde{\eth}' + \frac{\rho'}{\rho} \thorn + \tilde{\thorn}' \right)\beta^\circ = 0.
\end{equation}
So we get $\tilde{\thorn}' \beta^\circ = \tilde{\eth} \beta^\circ = \tilde{\eth}' \beta^\circ = 0$. 
In the Kinnersley frame and advanced Kerr-Newman coordinates, this means by \eqref{eq:Hop} that $\beta^\circ$ is constant. 
Next, the fact that $A_a$ belongs to $\mathscr{V}$ implies that that the charge $\dot Q=0$. Therefore, if ${\mathscr B}$ is a cross section of ${\mathscr I}^+$, we have
\begin{equation}
0 = \dot{Q} = \frac{1}{2\pi} \int_{{\mathscr B}} \star F = \frac{1}{\pi} \Im \left\{ \int_{{\mathscr B}} \phi_1 m \wedge \widebar{m} \right\} = -4 \Re \left\{\beta^\circ \right\}.
\end{equation}
By the same argument, since $\int_{{\mathscr B}} F = 0$ for $F={\rm d}A$, we get $\Im \left\{ \beta^\circ \right\} = 0$. Thus, $\phi_1 = 0$, and therefore $A_a$ has $F_{ab}=0$.\par
Thus, we can integrate $A_a = \nabla_a \chi$, and since $A_a$ is by assumption zero near $i^0$, we can also achieve that $\chi$ is constant (e.g. zero) near $i^0$. To show that $A_a$ belongs to $\mathscr{W}$, we finally need to convince ourselves that $\chi$ is constant on the bifurcation surface $\mathscr{S}$ also.
Now, \eqref{eq:1}, \eqref{eq:2}, \eqref{eq:3}, \eqref{starH}, \eqref{Sdaggerdef}, \eqref{epsilonidentity} and $\mathcal{T}A=-\phi_0=0=\phi_1$ imply that
\begin{equation}
\Omega[A, \Re(\mathcal{S}^\dagger \Phi)] = \half \Im \int_\mathscr{S} A_{m}(\thorn + \rho)\Phi \, m \wedge \widebar{m} = 0.
\end{equation}
However, we are free to choose $\Phi, \thorn \Phi$ independently at any point of $\Sigma$, hence $\mathscr{S}$, because we a free to prescribe initial data for $\Phi$. This gives $A_m = A_{\widebar{m}} = 0$ on $\mathscr{S}$, hence $\nabla_m \chi = \nabla_{\widebar{m}} \chi = 0$ on $\mathscr{S}$ since $A_a$ is already known to be pure gauge. Hence $\chi$ is constant on $\mathscr{S}$, completing the proof that $A_a \in \mathscr{W}$.
\end{proof}

\section{Inhomogeneous Maxwell equations}
\label{sec:4}

We now investigate the retarded solution of the Maxwell equations for $A_a$ in the presence of charge and current in the Kerr-Newman background, i.e., $\mathcal{E} A_a = 4\pi J_a$, where $\mathcal{E}$ is the Maxwell operator, given in tetrad formalism in Appendix \ref{maxwellappendix}. At first, we assume that $J_a$ is smooth and of compact support in $\mathscr{M}$, and of course divergenceless. Actually, smoothness is 
just a convenient assumption to simplify the exposition, but is not really needed because the equation is linear and thus makes sense straightforwardly in the distributional sense.\par

Ideally we would like to be able to write the solution $A_a$ up to gauge in CCK form, $2\Re( \mathcal{S}^\dagger \Phi )_a$, in which $\Phi \circeq \left\{ -2, 0 \right\}$ is a suitable potential, and $\mathcal{S}^\dagger$ as defined in Appendix \ref{teukolskyoperators} -- as was possible in the case of a homogeneous equation. However, this is clearly impossible here in general since the current density corresponding to such solution automatically has a vanishing $l$-component. To solve this problem we follow an idea introduced in \cite{Green:2019nam}. The aim is to introduce a relatively simple ``corrector potential'', taken here to be of the form $xl_a$, with $x \circeq \{-1,-1\}$ to be determined, to compensate the missing $J_l$ part of current density. In the other words, we seek $x$ such that
\begin{equation}
\label{lcompensatingeq}
\left[ \mathcal{E} \left( A - xl \right) \right]^a l_a = 0.
\end{equation}
Using the GHP form of the operator $\mathcal E$ given in Appendix \eqref{maxwellappendix}, we see that we need to define $x$ by 
\begin{equation}
\label{xdef}
\rho^2 \thorn \left[ \frac{\bar{\rho}}{\rho^3} \thorn \left( \frac{\rho}{\bar{\rho}} x \right) \right] \equiv
\left( \thorn - 2\rho \right) \left( \thorn + \rho - \bar{\rho} \right) x = - 4\pi J_l.
\end{equation}
Since we want a retarded solution, 
we impose as initial conditions for this ordinary differential equation along the orbits of $l^a$ that $x$ and $\thorn x$ vanish on $\mathscr{H}^-$.
Next, we define $S_a$ as the difference of $J_a$ and the current density corresponding to $xl_a$ i.e.,
\begin{equation}
\label{sdef}
S^a \equiv 4\pi J^a - \mathcal{E}(x l)^a \Longrightarrow S_l = 0.
\end{equation}
The assumption that $J_a$ is compactly supported on the exterior region $\mathscr{M}$ of the black hole implies that there is an open neighbourhood $\mathscr{O}$ of $\mathscr{H}^-$ such that $J_a$ vanishes on $\mathscr{O}$. This together with the initial conditions for $x$ implies that $x$ and hence $S_a$ vanish on $\mathscr{O}$. We next show:
\begin{lemma}
\label{etaElemma}
Let $S_a$ be the vector field as defined above. Then there exists a complex scalar field $\eta \circeq \{ -2, 0 \}$ such that 
\begin{equation}
\label{sourcedeterminereq}
2\Re ( \mathcal{T}^\dagger \eta )_a = S_a.
\end{equation}
\end{lemma}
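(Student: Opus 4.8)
The plan is to reduce the vector equation \eqref{sourcedeterminereq} to its tetrad components, solve the one component that is a genuine transport equation along the outgoing principal null rays, and then show that the remaining component holds automatically as a consequence of current conservation. First I would read off from Appendix \ref{teukolskyoperators} the explicit form of the adjoint reconstruction operator, which on a scalar $\eta \circeq \{-2,0\}$ is
\begin{equation}
(\mathcal{T}^\dagger \eta)_a = (\thorn - \rho)\eta\, m_a - (\eth - \tau)\eta\, l_a .
\end{equation}
In particular $(\mathcal{T}^\dagger \eta)_a l^a = 0$, so the $l$-component of \eqref{sourcedeterminereq} reads $0 = S_l$, which holds by construction \eqref{sdef}. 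Contracting with $\widebar m^a$ yields the single complex equation $(\thorn - \rho)\eta = -S_{\widebar m}$, while contracting with $n^a$ gives the real equation $2\Re[(\eth - \tau)\eta] = -S_n$. Since $S_a$ is real, the $m$-component is the conjugate of the $\widebar m$-component and carries no new information. Thus the content of the lemma is to produce an $\eta$ solving the $\widebar m$-equation for which the $n$-equation is then forced.

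The $\widebar m$-equation is an ordinary differential equation along the orbits of $l^a$. Using $\thorn\rho = \rho^2$ I would rewrite it as $\rho\,\thorn(\rho^{-1}\eta) = -S_{\widebar m}$ and integrate along the outgoing principal null rays, imposing the retarded condition $\eta = 0$ on $\mathscr{H}^-$. Because $J_a$, and hence $S_a$ by \eqref{sdef}, vanishes on the neighbourhood $\mathscr{O}$ of $\mathscr{H}^-$, this integration gives an $\eta$ vanishing throughout $\mathscr{O}$; in particular $(\mathcal{T}^\dagger \eta)_a$ vanishes near the horizon.

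It then remains to check the $n$-equation, and here the key observation is that $2\Re(\mathcal{T}^\dagger \eta)_a$ is automatically a conserved current for every $\eta$. Indeed, solving the hyperbolic adjoint Teukolsky equation $\mathcal{O}^\dagger \Phi = \eta$ and invoking the fundamental identity $\mathcal{E}^\dagger \mathcal{S}^\dagger = \mathcal{T}^\dagger \mathcal{O}^\dagger$ together with the self-adjointness $\mathcal{E}^\dagger = \mathcal{E}$ of the Maxwell operator, one has $\mathcal{T}^\dagger \eta = \mathcal{E} \mathcal{S}^\dagger \Phi = \nabla^b F[\mathcal{S}^\dagger \Phi]_{ab}$, whose divergence vanishes identically; equivalently $\nabla^a(\mathcal{T}^\dagger \eta)_a = 0$ may be checked directly as a GHP identity. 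Since $S_a$ is itself divergence-free — $\nabla^a S_a = 4\pi \nabla^a J_a - \nabla^a \mathcal{E}(x l)_a = 0$ by charge conservation and the antisymmetry of $F[xl]_{ab}$ — the residual $\sigma_a := 2\Re(\mathcal{T}^\dagger \eta)_a - S_a$ is conserved. By the previous paragraph $\sigma_a$ has vanishing $l$-, $m$- and $\widebar m$-components, so $\sigma_a = \sigma_{(n)}\, l_a$ with $\sigma_{(n)} = \sigma_a n^a$. For such a one-form the conservation law collapses to the homogeneous transport equation $(\thorn - \rho - \widebar\rho)\sigma_{(n)} = 0$ along the orbits of $l^a$. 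As $\sigma_{(n)}$ vanishes near $\mathscr{H}^-$, integrating forward along the principal null rays that sweep out $\mathscr{M}$ forces $\sigma_{(n)} \equiv 0$, which is exactly the $n$-equation; hence $2\Re(\mathcal{T}^\dagger \eta)_a = S_a$.

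The main obstacle is the step establishing that $2\Re(\mathcal{T}^\dagger \eta)_a$ is conserved and, correlatively, that the residual $\sigma_a$ points purely along $l_a$, so that conservation degenerates to a transport equation with the same null generator used to define $\eta$ and $x$. Carrying this out cleanly requires either the solvability of $\mathcal{O}^\dagger \Phi = \eta$ with control of supports, or a direct GHP verification of $\nabla^a (\mathcal{T}^\dagger \eta)_a = 0$ using the type-D Ricci identities (such as $\thorn\rho = \rho^2$ and $\thorn\tau = \rho\tau$) and the commutator $[\thorn, \eth]$, where the bookkeeping of boost and spin weights is delicate. The remaining analytic points — convergence of the transport integrals and the fact that the outgoing rays issuing from $\mathscr{H}^-$ cover all of $\mathscr{M}$ — are exactly as in the construction of the corrector $x$ in \eqref{xdef}.
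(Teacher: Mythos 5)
Your proposal is correct and follows essentially the same route as the paper's own proof: solve the $\widebar{m}$-component as a retarded transport equation along $l^a$, note that the $l$- and $m$-components are automatic, and then use the GHP identity $\nabla_a(\mathcal{T}^\dagger\eta)^a=0$ together with $\nabla_aS^a=0$ to reduce the residual $n$-component to the homogeneous equation $(\thorn-\rho-\bar\rho)y=0$, which vanishing data near $\mathscr{H}^-$ forces to be trivial (the paper phrases this via $y=a^\circ\rho\bar\rho$ with $a^\circ$ annihilated by $\thorn$, which is the same argument). The only cosmetic slip is the identity quoted as $\thorn\tau=\rho\tau$, which should read $\thorn\tau=\rho(\tau-\bar\tau')$; it plays no role in the steps you actually use.
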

\begin{proof}
We define $\eta$ by \eqref{mainetaEeq}, so that the $\widebar{m}$ component of the equation \eqref{sourcedeterminereq} is satisfied,
\begin{equation}
\label{mainetaEeq}
\rho \thorn \left( \frac{\eta}{\rho} \right)\equiv\left( \thorn - \rho \right) \eta = - S_{\widebar{m}}.
\end{equation}
Then the corresponding relation for its $m$ component is automatically satisfied as well. The solution to this equation is unique up to
\begin{equation}
\label{etaEambiguity}
\eta \longrightarrow \eta + b^\circ \rho,
\end{equation}
for any $b^\circ \circeq \left\{ -3, -1 \right\}$ annihilated by $\thorn$. The $l$-component of of  \eqref{sourcedeterminereq} is trivially satisfied since neither side of  that equation has such a component. Then, only the $n$ component of relation \eqref{sourcedeterminereq} remains to be satisfied by the remaining freedom \eqref{etaEambiguity}.\par
To see what we need to choose for $b^\circ \circeq \left\{ -3, -1 \right\}$, 
we define $y$ as
\begin{equation}
S_a - 2 \Re ( \mathcal{T}^\dagger \eta )_a = y l_a \Longrightarrow y = S_n + 2 \Re \left\{ \left( \eth - \tau \right) \eta \right\}.
\end{equation}
Now, one can show that $(\mathcal{T}^\dagger \eta)^a$ is divergenceless by the definition of $\mathcal{T}^\dagger$ \eqref{Tdaggerdef},
\begin{equation}
\nabla_a \left( \mathcal{T}^\dagger \eta \right)^a = \left[ \left( \eth - \tau - \bar{\tau}' \right) \left( \thorn - \rho \right) - \left( \thorn - \rho - \bar{\rho} \right)\left( \eth - \tau \right) \right] \eta = 0,
\end{equation}
using GHP commutators and the type D property.
In addition, given that the divergence is a real operator and that $S^a$ is divergence free,
\begin{equation}
0= \nabla_a \left( S - 2 \Re \left\{ \mathcal{T}^\dagger \eta \right\} \right)^a = \nabla_a \left( y l^a \right) = \left( \thorn - \rho - \bar{\rho} \right)y 
\equiv \rho \bar \rho \thorn \left( \frac{y}{\rho \bar \rho} \right)
\end{equation}
and therefore
\begin{equation}
 y = a^\circ \rho \bar{\rho},
\end{equation}
where $a^\circ \circeq \left\{ -3, -3 \right\}$ is some real\footnote{Note that for GHP quantities of equal weight $\circeq \{p,p\}$, 
there is an invariant notion of being real, as the intrinsic conjugation operation is swapping the weights.} GHP quantity annihilated by $\thorn$.\par
If one could now solve the equation $2 \tilde{\eth} b^\circ + a^\circ = 0$ for $b^\circ \circeq \left\{ -3, -1 \right\}$, then $y$ could be set to zero by 
making use of our freedom to change $\eta \longrightarrow \eta + b^\circ \rho$. Indeed, under this change, $y$ changes as
\begin{equation}
y \longrightarrow y + 2 \Re \left\{ \left( \eth - \tau \right) b^\circ \rho \right\} = a^\circ \rho \bar{\rho} + 2 \Re \left\{ \tilde{\eth} b^\circ \right\} \rho \bar{\rho} = 0.
\end{equation}
As a result, we need to show that the equation $\tilde{\eth} b^\circ = - \half a^\circ$ is soluble.\par

To see what is going on, let us specialize momentarily  to the Reissner-Nordstrom background. There, one can expand $a^\circ$ and $b^\circ$ as follows:
\begin{subequations}
\begin{equation}
a^\circ = \sum_{l, m} \int_{-\infty}^\infty {\rm d} \omega \; \hat{a}_{l m}(\omega) \swss{0}{0}(\theta) e^{-i\omega u + im \varphi}.
\end{equation}
\begin{equation}
b^\circ = \sum_{l, m} \int_{-\infty}^\infty {\rm d} \omega \; \hat{b}_{l m}(\omega) \swss{-1}{0}(\theta) e^{-i\omega u + im \varphi}.
\end{equation}
\end{subequations}
Here, $\swss{s}{\nu}(\theta)$ are the spin-weighted spherical harmonics and $\swsl{s}{0} = l^2 + l - s^2$ are separation constants in a non-rotating (hence $\nu = 0$) black hole. Then,
\begin{equation}
\begin{split}
- 2 \tilde{\eth} b^\circ = & \sqrt{2} \sum_{l, m} \int_{-\infty}^\infty {\rm d} \omega \, \hat{b}_{l m}(\omega) \mathcal{L}^\dagger_1 \swss{-1}{0}(\theta) e^{-i\omega u + im \varphi}  \\
= & - \sum_{l, m} \, \int_{-\infty}^\infty {\rm d} \omega \, \hat{b}_{l m}(\omega) [2l(l+1)]^{1/2} \swss{0}{0}(\theta) e^{-i\omega u + im \varphi} \\
=& \sum_{l, m} \int_{-\infty}^\infty {\rm d} \omega \, \hat{a}_{l m}(\omega) \swss{0}{0}(\theta) e^{-i\omega u + im \varphi},
\end{split}
\end{equation}
where $\mathcal{L}^\dagger_1$ is the ladder operator of spin-weighted harmonics as introduced in \cite{Chandrasekhar:1984siy}. 
This gives the condition $\hat{b}_{l m}(\omega) \sqrt{2l(l+1)} = - \hat{a}_{l m}(\omega)$.
Therefore, we can solve for $\hat b_{lm}$ provided that $a^\circ$, hence $y$, has no $l=0$ part. This can easily be related to the charge $\dot Q$ of the retarded solution $A_a$, which indeed vanishes.\par
A similar, but more complicated, argument could be made for the Kerr-Newman background.
However, rather proceeding further in this way, we shall use instead the following simple argument, which relies on having a retarded solution. For this purpose, we impose the vanishing $\eta$ as an initial condition on $\mathscr{H}_-$. In addition, the fact that $S_a$ vanishes on $\mathscr{H}_-$ implies that $\eta$ vanishes on $\mathscr{H}_-$ as well. Consequently, $y$ vanishes on $\mathscr{H}_-$. On the other hand, $y = a^\circ \rho \bar{\rho}$ implies that $a^\circ$ vanishes. This implies that $y$ vanishes on the whole manifold $\mathscr{M}$ covering the exterior of the black hole, thus completing the proof.
\end{proof}

Next, we define $\Phi$ by solving the adjoint Teukolsky equation with retarded boundary conditions in the presence of the source \eqref{maxwellteukolskyeq} determined by \eqref{sourcedeterminereq},
\begin{equation}
\label{maxwellteukolskyeq}
\mathcal{O}^\dagger \Phi = \eta.
\end{equation}
Then we have:
\begin{theorem}
\label{inhomogeneoustheorem}
$A_a = xl_a + 2 \Re ( \mathcal{S}^\dagger \Phi) _a$ is a retarded solution to inhomogeneous Maxwell's equations with smooth compactly supported source $J_a$, where $\Phi$ is the retarded solution to 
\eqref{maxwellteukolskyeq}, and the corrector potential $x$ specified in \eqref{xdef}.
\end{theorem}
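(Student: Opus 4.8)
The plan is to verify directly that the proposed $A_a$ satisfies $\mathcal{E}A_a = 4\pi J_a$ and carries retarded data, assembling the ingredients already in place. First I would apply the Maxwell operator to the decomposition $A_a = xl_a + 2\Re(\mathcal{S}^\dagger\Phi)_a$. Since $\mathcal{E}$ is a real linear operator, it commutes with both the sum and the real part, so that $\mathcal{E}A_a = \mathcal{E}(xl)_a + 2\Re[\mathcal{E}(\mathcal{S}^\dagger\Phi)]_a$.

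The key input is the adjoint of the fundamental operator identity, $\mathcal{E}^\dagger\mathcal{S}^\dagger = \mathcal{T}^\dagger\mathcal{O}^\dagger$, combined with the formal self-adjointness of the Maxwell operator, $\mathcal{E}^\dagger = \mathcal{E}$ (immediate from $\mathcal{E}A_a = \nabla^b F_{ab}$ upon integration by parts, since the resulting bilinear form is symmetric under exchange of the two potentials). Together with the defining equation \eqref{maxwellteukolskyeq}, these give $\mathcal{E}(\mathcal{S}^\dagger\Phi) = \mathcal{T}^\dagger\mathcal{O}^\dagger\Phi = \mathcal{T}^\dagger\eta$, so that the CCK contribution reduces to $2\Re[\mathcal{E}(\mathcal{S}^\dagger\Phi)]_a = 2\Re(\mathcal{T}^\dagger\eta)_a$, which by Lemma \ref{etaElemma}, i.e. \eqref{sourcedeterminereq}, is exactly $S_a$. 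Inserting the definition \eqref{sdef} of $S_a$, namely $S_a = 4\pi J_a - \mathcal{E}(xl)_a$, the two corrector terms cancel and one is left with $\mathcal{E}A_a = 4\pi J_a$, as required.

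It then remains to confirm that $A_a$ is the \emph{retarded} solution. For this I would track the support and boundary conditions through each step of the construction: the corrector $x$ was fixed by integrating \eqref{xdef} with $x$ and $\thorn x$ vanishing on $\mathscr{H}^-$; the scalar $\eta$ was likewise chosen to vanish on $\mathscr{H}^-$; and $\Phi$ solves \eqref{maxwellteukolskyeq} with retarded conditions. Because $J_a$, and hence $\eta$, are supported away from a neighbourhood $\mathscr{O}$ of $\mathscr{H}^-$, each of $xl_a$ and $\mathcal{S}^\dagger\Phi$ (a differential operator applied to the retarded $\Phi$) vanishes to the past of the source, so that $A_a$ carries trivial data there. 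By uniqueness of the retarded solution to the hyperbolic Maxwell system, $A_a$ is then the retarded solution.

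The algebraic core of the argument contains no genuine obstacle, since all of the analytic content has been front-loaded into the construction of $x$, into Lemma \ref{etaElemma}, and into the solvability of \eqref{maxwellteukolskyeq}. The only points requiring care are the interchange of $\mathcal{E}$ with $\Re$ (legitimate because $\mathcal{E}$ is real) and, more importantly, the bookkeeping of the boundary data. The latter is the true crux: it is what guarantees that the purely algebraic cancellation producing $4\pi J_a$ is accompanied by genuinely retarded support, and thereby distinguishes the statement from the weaker claim of merely exhibiting \emph{some} solution of the sourced equation.
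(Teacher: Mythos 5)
Your proposal is correct and follows essentially the same route as the paper's proof: the algebraic cancellation via $\mathcal{E}\mathcal{S}^\dagger = \mathcal{T}^\dagger\mathcal{O}^\dagger$ (using $\mathcal{E}^\dagger=\mathcal{E}$), Lemma \ref{etaElemma}, and the definition \eqref{sdef} of $S_a$, with the retarded character traced back to the boundary conditions imposed on $x$, $\eta$ and $\Phi$. Your more explicit bookkeeping of the support near $\mathscr{H}^-$ is a slight elaboration of the paper's one-line remark, but not a different argument.
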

\begin{proof}
Given that $\mathcal{T}^\dagger \mathcal{O}^\dagger = \mathcal{E} \mathcal{S}^\dagger$ in every algebraically special spacetime and using the lemma \eqref{etaElemma}, we have
\begin{equation}
\begin{split}
& 4\pi J_a - \mathcal{E}(x l)_a = S_a \\
= & 2 \Re (\mathcal{T}^\dagger \eta) _a = 2 \Re ( \mathcal{T}^\dagger \mathcal{O}^\dagger \Phi )_a = 2 \Re (\mathcal{E} \mathcal{S}^\dagger \Phi )_a = 2 \mathcal{E} \Re ( \mathcal{S}^\dagger \Phi )_a 
\end{split}
\end{equation}
which shows that $A_a = xl_a + 2 \Re ( \mathcal{S}^\dagger \Phi) _a$ is a solution to inhomogeneous Maxwell's equations. The retarded character follows from the boundary conditions that we imposed when integrating the equation \eqref{xdef} for $x$ and \eqref{mainetaEeq} for $\eta$. 
\end{proof}

We remark that the Maxwell scalars $\phi_i$ for such $A_a = xl_a + 2 \Re ( \mathcal{S}^\dagger \Phi) _a$ are
\begin{subequations}
\begin{align}
\phi_2 =& \left( \eth' - \bar{\tau} \right) x + \tfrac{1}{2} \mathcal{O}^\dagger \Phi - \tfrac{1}{2} \eth'^2 \widebar{\Phi},\\
\phi_1 =& \tfrac{1}{2} \left( \thorn + \rho - \bar{\rho} \right) x - \tfrac{1}{4} \left[ \left( \thorn + \rho - \bar{\rho} \right) \left( \eth' + \bar{\tau} \right) + \left( \eth' + \tau' - \bar{\tau} \right) \left( \thorn + \bar{\rho} \right) \right] \widebar{\Phi},\\
\label{eq:Phi}
\phi_0 =& - \tfrac{1}{2} \thorn^2 \widebar{\Phi}.
\end{align}
\end{subequations}
In particular, we see from the equation \eqref{eq:Phi} that $\Phi$ can be obtained directly from $\phi_0$, which obeys a Teukolsky equation with source,
\begin{equation}
\label{Teuk}
\mathcal{O} \phi_0 = -4 \pi {\mathcal S} J,
\end{equation}
directly obtainable from $J_a$. However, the corrector $x$ must still be determined explicitly in order to get $A_a$.

\medskip
\noindent
\textbf{Summary:} For the benefit of the reader, we now summarize our algorithm for the forward solution of Maxwell's equations $\nabla_a F^{ab} = 4\pi J^b$ on the exterior of the Kerr-Newman black hole spacetime, with $J_a$ assumed to be of compact support:

\begin{enumerate}
\item Solve the ODE \eqref{xdef} for $x$ with vanishing initial conditions at $\mathscr{H}_-$. In the Kinnersley tetrad and advanced Kerr-Newman coordinates $(u,r,\theta,\phi_*)$, this amounts to solving
\begin{equation}
\rho^2 \partial_r \left[ \frac{\bar{\rho}}{\rho^3} \partial_r \left( \frac{\rho}{\bar{\rho}} x \right) \right] = - 4\pi J_l.
\end{equation}
with trivial initial conditions at $r=r_+$ from $r_+$ to $\infty$.

\item We define $S^a$ by eq. \eqref{sdef}, and then we find $\eta$ by integrating \eqref{mainetaEeq}. In the Kinnersley tetrad and advanced Kerr-Newman coordinates $(u,r,\theta,\phi_*)$, this amounts to solving
\begin{equation}
\rho \partial_r \left( \frac{\eta}{\rho} \right) = -S_{\widebar{m}} 
\end{equation}
with trivial initial conditions at $r=r_+$ from $r_+$ to $\infty$.

\item We solve the adjoint Teukolsky equation $\mathcal{O}^\dagger \Phi = \eta$ with retarded initial conditions. 

\item The desired ``3CK (corrected CCK)'' vector potential is obtained as $A_a =xl_a + 2 \Re ( \mathcal{S}^\dagger \Phi)_a$, or
\begin{equation}
\label{eq:CCKcorr}
A_a=xl_a+\Re \left\{ \left[ m_a \left( \thorn + \rho \right) - l_a \left( \eth + \tau \right) \right] \Phi \right\}
\end{equation}
with $\thorn,\eth,\rho,\tau$ given by the expressions in Appendix \ref{KerrNewmanQuants} referring to the Kinnersley tetrad and advanced Kerr-Newman coordinates.
\end{enumerate}

In particular, we see that the 3CK vector potential is unique as the above algorithm leaves no arbitrary choices.

\section{Singularity structure of $A_a$ in corrected 3CK form}
\label{sec:5}

Here, we briefly discuss the method in the context of the electromagnetic potential $A_a$ generated by a point-like source in the Kerr black hole background. The corresponding conserved charge-current density is
\begin{equation}
\label{pointsource}
J^a = q \int {\rm d} t (-g)^{-\sfrac{1}{2}} \dfrac{{\rm d} x_*^a}{{\rm d} t}(t) \; \delta^4(x - x_*(t)),
\end{equation}
with $x_*(t)$ some timelike geodesic in the background. This point source is evidently distributional, and not of compact support. Nevertheless, the theory developed in the previous section involves only solving linear partial differential equations and so applies also to such a distributional source if we artificially make it compactly supported by restricting the $t$ integration to a large interval $(-T,T)$ i.e. the particle only exists during this interval.\par
It is clear from the integration scheme summarized at the end of the previous section that the vector potential $A_a$ in 3CK form \eqref{eq:CCKcorr} will be non-smooth not only on the particle worldline, but also in certain locations away from the particle worldline. 
In fact, since the corrector piece $xl_a$ is obtained by integrating outward along the orbits of $l^a$ an ordinary differential equation with a $\delta$-function type source concentrated on the worldline, it is clear that $xl_a$ will have singularities (i.e. be distributional rather than smooth) on the ``shadow'' of the source, see the following fig. \ref{fig:1} taken from  \cite{Green:2019nam}.
Thus, one would expect also the 3CK $A_a$ as in \eqref{eq:CCKcorr} to have certain singularities also off the worldline.

\begin{figure}
  \includegraphics[width=0.6\textwidth,]{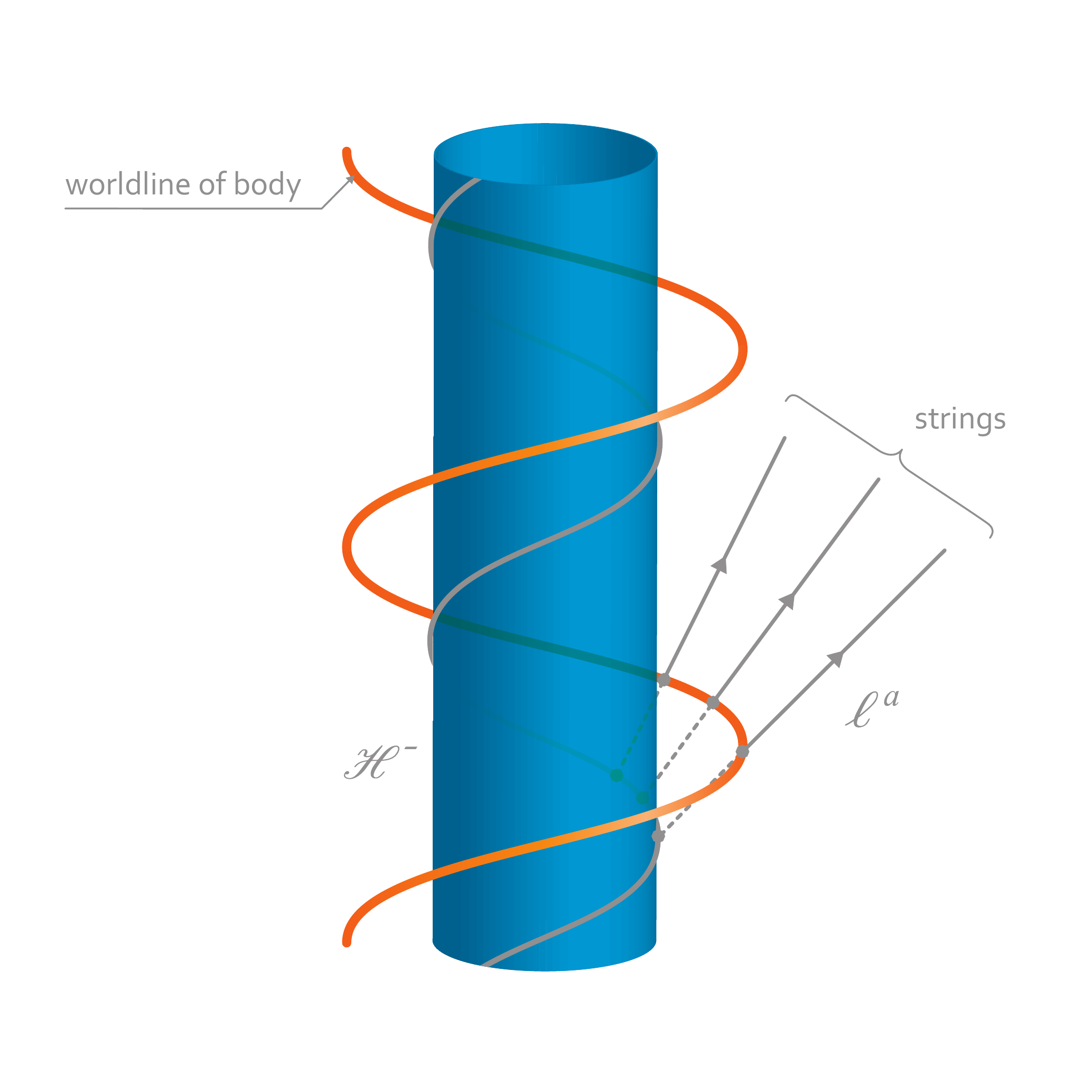}
  \caption{\label{fig:1} Worldline of a body orbiting a Kerr
    black hole. The corrector $x$ is supported on the
    semi-infinite strings (``shadows'') extending from the worldline to infinity.}
\end{figure}

We shall now analyze these singularities by comparing the 3LorenzCK vector potential \eqref{eq:CCKcorr} denoted in the following by $A_a^R$ (for ``radiation'' gauge), to a representer in Lorenz gauge denoted in the following by $A^L_a$. As $A^R_a$, the Lorenz gauge representer $A^L_a$ is required to satisfy retarded boundary conditions. It is thus uniquely determined by the wave equation
\begin{equation}
\label{eq:wave}
\nabla^b \nabla_b A_a^L  = 4\pi J_a .
\end{equation}
The gauge transformation $\chi$ connecting the two representers satisfies
\begin{equation}
A_a^R = A_a^L + \nabla_a \chi.
\end{equation}
Using \eqref{eq:CCKcorr} and GHP identities in the type D background, one shows that 
\begin{equation}
\nabla^a \nabla_a \chi = \nabla^a A_a^R = (\thorn -\rho -\bar \rho) x + 2 \Re \{ (\rho \eth - \tau \thorn) \Phi \}.
\end{equation}
Since both $A_a^R, A_a^L$ vanish by construction in a neighborhood ${\mathscr H}^- \cup {\mathscr I}^-$, this wave equation should be solved with retarded boundary conditions, and this solution is unique. The wave operator $\nabla^a \nabla_a$ acting 
on GHP scalars of trivial weight such as $\chi$ is just the spin 0 Teukolsky operator, so in practice the solution may be obtained 
by a separation of variables ansatz after a mode decomposition of the source in the equation.
\par
The singularities of $A_a^R$ off the worldline reside entirely in the gauge piece $\nabla_a\chi$ but not 
in $A_a^L$, which, as we now argue, is smooth off the worldline. Let $G$ be the retarded Green's function, with corresponding solution
\begin{equation}
A_a^L(x) = \int G_a{}^{a'}(x,x')J_{a'}(x') {\rm d}V'. 
\end{equation}
The retarded Green's function is expected to propagate the singularity in $J_a$ on the worldline only along null directions into the future. 
This expectation can be made more precise using the machinery of microlocal analysis, see e.g. \cite{hormander1964linear} for details.\par
In this calculus, the failure of the source $J_a$ to be smooth is measured by its wave front set, ${\rm WF}(J_a)$. The wave front set ${\rm WF}(u)$ of a (scalar) distribution $u$ is a subset of the cotangent space $T^*{\mathscr M}$ minus the zero section. 
Among other things, it has the property that if it is empty of over a point $x$ then $u$ is a smooth density near $x$. In the case of a 1-form valued distribution $u_a$, a refined concept is the  polarization set ${\rm WF}_{\rm pol}(u_a)$, which is a subset of $\pi^* E$, where $\pi: (x,k_a) \mapsto x$ is the projection $T^*{\mathscr M} \to \mathscr M$, see \cite{dencker1982propagation} for the precise definitions. The wave front set is in this case the union of all $(x,k_a)$ such that $(x,k_a, v_b)$ is an element of ${\rm WF}_{\rm pol}(u_b)$.\par
In our case, the definitions detailed in  \cite{hormander1964linear} easily give that  ${\rm WF}(J_b)$ is the union of $\{(x,p_a) \in T^* {\mathscr M}: \exists t \in (-T,T), x=x_*(t), p_a \dot x_*^a(t)=0, p_a \neq 0 \}$ and $\{ (x, p_a) \in T^* {\mathscr M}: x=x_*(\pm T), p_a \neq 0\}$. Since the operator $P=\nabla^a \nabla_a$ has $\sigma_P(x,p_a)=-g^{ab}(x)p_ap_b {\rm id}_{T^*{\mathscr M}}$ as its principal symbol, it automatically follows from properties of the wave front set that if $(x,p_a) \in {\rm WF}(A_b)$, then $\sigma_P(x,p_a)=0$ or $x$ is at a place where $PA_a$ is not smooth, i.e. it must be the case that either $x$ is on the worldline or $p_a$ is a null vector.\par
More detailed information can be extracted from the propagation of singularities theorem \cite{hormander1964linear}, or rather the version for partial differential operators such as $P$ operating on sections of a vector bundle $E$ over ${\mathscr M}$ \cite{dencker1982propagation}. 
The theorem states that if $(x,k_a) \notin {\rm WF}(PA_b)$, then in a neighborhood of $(x,k_a)$ in ${\mathscr N}_P = \{ (x, p_a) \in T^*{\mathscr M} \setminus 0 : \sigma_P(x,p_a) = 0 \}$, ${\rm WF}_{\rm pol}(A_b)$ is a union of Hamilton orbits for the Hamiltonian $H(x,k_a) = \sigma_P(x,k_a)$. 
More precisely, two elements of the polarization set are on the same orbit if they are equivalent, $(x,p_a,v_b) \sim (x',p_a',v_b')$, in the sense that $x$ can be joined to $x'$ by a null geodesic with tangent $p^a$ that is parallel transported to $p_a'$ and $v_a$ which is parallel transported to $v_a'$.\footnote{This follows from the fact that the Dencker connection \cite{dencker1982propagation} arising from $\nabla^a \nabla_a$ is easily checked to be the pull-back to $E$ of the Levi-Civita-connection $\nabla_a$.} \par
For a retarded solution $A_a$ which vanishes outside the causal future of the partial worldline, 
it follows from such a statement (using that the geodesic is timelike) that ${\rm WF}(A_a)$ is 
non-trivial at most either on the worldline or the null geodesic forward outflows of the beginning and end points $x_*(\pm T)$. The latter get shifted off to infinity as $T$ becomes large and play nor role. Hence, the Lorenz gauge retarded vector potential is smooth away from the worldline and the null geodesic forward outflows of the beginning and end points $x_*(\pm T)$. \par
Using the Hadamard-deWitt expansion \cite{DeWitt:1960fc} of $G$, it is possible to determine the local behavior of $A_a^L$ near the worldline. 
To this end, one defines following \cite{Merlin:2016boc} a Fermi-Walker-like coordinate system in the vicinity of the worldline. One of the coordinates, $t$, is the proper time parameter on the worldline. The other coordinates $z, x^i, i=1,2$ are Riemann normal coordinates ``perpendicular to the worldline'' defined via the exponential map set up at the point of the worldline corresponding to the chosen $t$ value. Furthermore, $z$ is defined in such a way that on the worldline, $l = (\partial_t + \partial_z)/\sqrt{2}$. The local behavior of $A_a^L$ is then given by
\begin{equation}
A_a^L = qu_a \, {\rm P.V.}\frac{1}{s} + O(\log s). 
\end{equation}
In this expression $s^2 = (x^1)^2 + (x^2)^2 + z^2$, ``P.V.'' is the principal value and $u_a(x)$ is an arbitrary smooth extrapolation of the unit tangent field of the worldline.  $O(\log)$ represents a distribution with at most logarithmic asymptotic scaling in the transversal coordinates $z, x^i, i=1,2$. Since $A_l^R=0$, the gauge function $\chi$ is then obtained integrating $\nabla_l \chi = -A_l^L$ with trivial initial conditions on ${\mathscr H}^-$. The next corollary summarizes the discussion.

\begin{corollary}
The 3CK reconstructed vector potential $A_a^R$ is gauge equivalent to a retarded vector potential in Lorenz gauge, $A_a^L$ which is smooth off the worldline (and the forward null geodesic outflows of the beginning and end-points). The unique gauge function $\chi$ relating these two is smooth except for the ``shadow'' of the worldline as in fig. \ref{fig:1}, and behaves locally near the worldline as
\begin{equation}
\chi = -q \log(s-z) + O(s \log s).
\end{equation}
\end{corollary}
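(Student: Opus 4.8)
The plan is to assemble the three assertions of the corollary from the ingredients established above, treating them in the order (i) smoothness of $A^L_a$ off the worldline, (ii) the locus of the remaining singularities of $\chi$, and (iii) the leading local form of $\chi$ at the worldline.

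First I would record that the smoothness of $A^L_a$ away from the worldline has essentially been settled in the preceding discussion: $A^L_a$ is the retarded solution of the wave equation \eqref{eq:wave}, whose Dencker connection coincides with the Levi-Civita connection, so the polarization version of the propagation-of-singularities theorem confines ${\rm WF}(A^L_a)$ to null geodesics issuing from ${\rm WF}(J_a)$. Since ${\rm WF}(J_a)$ sits over the (timelike) worldline together with the conormals of the two endpoints $x_*(\pm T)$, and since $A^L_a$ obeys retarded boundary conditions, the only singularities lie on the worldline and on the forward null outflows of $x_*(\pm T)$; the latter recede to infinity as $T\to\infty$. This gives the first clause of the corollary.

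Next I would locate the singularities of the gauge function. Writing $\nabla_a\chi = A^R_a - A^L_a$ and using that $A^L_a$ is smooth off the worldline, the off-worldline singular support of $\nabla_a\chi$ equals that of $A^R_a = xl_a + 2\Re(\mathcal{S}^\dagger\Phi)_a$. The reconstructed term $\Re(\mathcal{S}^\dagger\Phi)_a$ is built from $\Phi$, itself a retarded solution of a Teukolsky equation sourced by $J$, so by the same microlocal reasoning it is singular only on the worldline and endpoint outflows. The corrector $xl_a$, by contrast, is obtained by integrating the ODE \eqref{xdef} with a $\delta$-type source on the worldline outward along the orbits of $l^a$, hence is singular exactly on the shadow of fig.~\ref{fig:1}. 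Thus $\chi$ is smooth except on the shadow, which is the second clause.

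Finally, for the local behaviour I would integrate the radiation-gauge condition. Since $A^R_l=0$, the relation $A^R_a = A^L_a + \nabla_a\chi$ gives $\nabla_l\chi = -A^L_l$, to be solved along the orbits of $l^a$ with trivial data on ${\mathscr H}^-$. Substituting the Hadamard--deWitt form $A^L_a = q\,u_a\,{\rm P.V.}\,\tfrac{1}{s} + O(\log s)$ and using $l = (\partial_t + \partial_z)/\sqrt{2}$ with $l^a u_a = -1/\sqrt{2}$ on the worldline, the source becomes $-A^L_l = \tfrac{q}{\sqrt2}\,{\rm P.V.}\,\tfrac1s + O(\log s)$; integrating $1/s$ along $l$ (an ${\rm arcsinh}$-type antiderivative in the coordinate along $l$) produces a logarithm of the null combination $s-z$, concentrated on the forward string, while the $O(\log s)$ source integrates to an $O(s\log s)$ remainder. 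This yields $\chi = -q\log(s-z) + O(s\log s)$.

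I expect the last step to be the main obstacle. The microlocal part (i)--(ii) is conceptually routine once Dencker's theorem is invoked, but extracting the precise local form of $\chi$ requires care: one must justify the Coulombian leading term of $A^L_a$ from the Hadamard--deWitt expansion in the curved background and fix the meaning of the principal-value prescription, then control the integration constant arising from matching the near-worldline expansion to the global retarded solution so that only the string combination $\log(s-z)$ survives with the stated coefficient, and finally verify that the subleading terms genuinely contribute only at order $s\log s$.
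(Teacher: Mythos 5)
Your clauses (i) and (iii) follow the paper's own route: the microlocal argument (wave front set of $J_a$, principal symbol of $\nabla^a\nabla_a$, Dencker's propagation theorem, retarded support) for the smoothness of $A^L_a$ off the worldline is the same, and the local form of $\chi$ is obtained, as in the paper, by integrating $\nabla_l\chi=-A^L_l$ (a consequence of $A^R_l=0$) against the Hadamard--deWitt leading term $q\,u_a\,{\rm P.V.}\,s^{-1}$ along the orbits of $l^a$ with trivial data on ${\mathscr H}^-$.

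Your argument for clause (ii), however, contains a genuine error. You assert that the reconstructed piece $2\Re(\mathcal{S}^\dagger\Phi)_a$ is singular only on the worldline and the endpoint outflows because ``$\Phi$ is a retarded solution of a Teukolsky equation sourced by $J$.'' It is not: $\Phi$ solves $\mathcal{O}^\dagger\Phi=\eta$, where $\eta$ is obtained by radially integrating $-S_{\widebar m}$ with $S_a=4\pi J_a-\mathcal{E}(xl)_a$; since $x$ is distributional on the entire shadow, so are $S_a$ and $\eta$, and the propagation-of-singularities theorem then only confines ${\rm WF}(\Phi)$ to null outflows of the whole shadow, a potentially larger set. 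Indeed the reconstructed term \emph{must} be singular on the shadow: its $m$-component equals $A^L_m+\nabla_m\chi$ (the corrector $xl_a$ contributes nothing to this component since $(xl)_m=0$), and $\nabla_m\chi$ is singular along the string. So your chain ``singular support of $\nabla_a\chi$ off the worldline $=$ that of $A^R_a$ $=$ that of $xl_a$'' fails at the last step and, taken at face value, would bound the singularities of $\chi$ only by a set larger than the shadow. The correct argument --- and the paper's --- is the transport equation you already invoke in clause (iii): $\chi$ is obtained by integrating $\nabla_l\chi=-A^L_l$ along orbits of $l^a$ with trivial data on ${\mathscr H}^-$, and since $A^L_l$ is smooth off the worldline, $\chi$ is smooth away from the union of $l$-orbits meeting the worldline, which is precisely the shadow; the singularity of the reconstructed term on the shadow is then a consequence, not an input.
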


Thus, we conclude that $A_a^R$ is equivalent to one of the completed ``half-string'' solutions of \cite{Pound:2013faa}. The difference in our case is that we are dealing with spin 1 and that our $A_a^R$ is manifestly a distributional solution to the inhomogeneous Maxwell equations with a systematic integration scheme for the corrector, which furthermore is shown to only have an $n$ tetrad component. 

\medskip
{\bf Acknowledgements:} S.H. is grateful to the Max-Planck Society for supporting the collaboration between MPI-MiS and Leipzig U., grant Proj. Bez. M.FE.A.MATN0003, and V.T. is grateful to International Max Planck Research School for support through a 
studentship. We have benefitted from discussions with S. R. Green, M. van de Meent, A. Pound, and P. Zimmerman. 

\noappendicestocpagenum
\begin{appendices}
\section{\label{KerrNewmanQuants}Kerr-Newman quantities}

The Kerr-Newman metric is in advanced Kerr-Newman coordinates:
\begin{multline}
{\rm d} s^{2} = \left( 1-\frac{2Mr - Q^2}{\Sigma} \right) {\rm d}u^2 + 2 {\rm d}u {\rm d}r + 2 \frac{a\sin^2\theta }{\Sigma}\left( 2Mr - Q^2 \right) {\rm d}u {\rm d}\varphi \\
- 2 a \sin^2\theta {\rm d}r {\rm d}\varphi - \Sigma {\rm d}\theta^2 + \frac{\sin^2\theta}{\Sigma} \left( \Delta a^2 \sin^2\theta - (a^2 + r^2)^2 \right) {\rm d}\varphi^2,
\end{multline}
where
\begin{subequations}
\begin{align}
\Gamma =& r + ia\cos\theta,\\
\Delta =& r^2 - 2Mr + a^2 + Q^2,\\
\Sigma =& \abs{\Gamma}^2 = r^2 + a^2 \cos^2\theta.
\end{align}
\end{subequations}
A convenient choice of Newman-Penrose tetrad $l$, $n$, and $m$ aligned with the principal null directions \cite{Geroch:1973am} is in this coordinate system
\begin{subequations}
\begin{align}
l =& \frac{\partial}{\partial r},\\
n =& \frac{r^2 + a^2}{\Sigma} \frac{\partial}{\partial u} - \frac{\Delta}{2\Sigma} \frac{\partial}{\partial r} + \frac{a}{\Sigma} \frac{\partial}{\partial \varphi},\\
m =& \frac{1}{\sqrt{2} \Gamma} \left( ia\sin\theta \frac{\partial}{\partial u} + \frac{\partial}{\partial \theta} + i\csc\theta \frac{\partial}{\partial \varphi} \right),
\end{align}
\end{subequations}
which is referred to as the ``Kinnersley frame''. In this frame,
\begin{subequations}
\begin{align}
\thorn =& \frac{\partial}{\partial r,}\\
\thorn' =& \frac{r^2 + a^2}{\Sigma} \frac{\partial}{\partial u} - \frac{\Delta}{2\Sigma} \frac{\partial}{\partial r} + \frac{a}{\Sigma} \frac{\partial}{\partial \varphi} + p \epsilon' + q \bar{\epsilon}',\\
\eth =& \frac{1}{\sqrt{2} \Gamma} \left( ia\sin\theta \frac{\partial}{\partial u} + \frac{\partial}{\partial \theta} + i\csc\theta \frac{\partial}{\partial \varphi} \right) -p \beta + q \bar{\beta}',\\
\eth' =& \frac{1}{\sqrt{2} \bar{\Gamma}} \left( -ia\sin\theta \frac{\partial}{\partial u} + \frac{\partial}{\partial \theta} - i\csc\theta \frac{\partial}{\partial \varphi} \right) + p \beta' - q \bar{\beta},
\end{align}
\end{subequations}
and
\begin{subequations}
\begin{align}
\epsilon =& 0,\\
\epsilon' =& \frac{\Delta-\bar{\Gamma} (r - M)}{2 \Sigma \bar{\Gamma}},\\
\beta =& \frac{\cot\theta}{2\sqrt{2} \Gamma},\\
\beta' =& \frac{\cot\theta}{2\sqrt{2} \bar{\Gamma}} - \frac{ia\sin\theta}{\sqrt{2} \bar{\Gamma}^2} = \bar{\beta} + \tau' =  \bar{\beta} - \frac{\rho}{\bar{\rho}} \bar{\tau},\\
\rho =& -\frac{1}{\bar{\Gamma}},\\
\rho' =& \frac{\Delta}{2 \Sigma \bar{\Gamma}},\\
\tau =& - \frac{ia\sin\theta}{\sqrt{2} \Sigma},\\
\tau' =& - \frac{ia\sin\theta}{\sqrt{2} \bar{\Gamma}^2} = -\frac{\rho}{\bar{\rho}} \bar{\tau}.
\end{align}
\end{subequations}
On any type D background satisfying $R_{ll} = R_{lm} = R_{mm} = 0$ (e.g. Kerr-Newman), we have the following identities:
\begin{subequations}
\begin{align}
\thorn \rho =& \rho^2,\\
\thorn \tau =& \rho (\tau - \bar{\tau}'),\\
\eth \rho =& \tau (\rho - \bar{\rho}),\\
\eth \tau =& \tau^2,\\
\thorn \rho' - \eth \tau' =& \bar{\rho} \rho' - \tau' \bar{\tau}' - \Psi_2.
\end{align}
\end{subequations}
Also, specifically for the Kerr-Newman background,
\begin{subequations}
\begin{align}
\thorn \tau' =& 2\rho \tau',\\
\thorn' \tau =& 2\rho' \tau,\\
\eth \rho' =& 2\rho' \tau,\\
\eth \bar{\rho} =& 2\bar{\rho} \bar{\tau}'
\end{align}
\end{subequations}
and $\kappa=\kappa'=\sigma=\sigma'=\Psi_i =0 , i \neq 2$ (or more generally for any vacuum type D metric).

\section{\label{maxwellappendix}Maxwell's equations in GHP form}

The Maxwell operator $\nabla^b F_{ab} \equiv \mathcal{E}A_b$ is on any type D background expressible as:
\begin{subequations}
\begin{multline}
\left( \mathcal{E} A \right)_l = - \left[ 2 \left( \eth' - \tau' \right) \left( \eth - \bar{\tau}' \right) - \left( \thorn - 2\rho \right) \left( \thorn' + \rho' - \bar{\rho}' \right) \right] A_l - \left( \thorn - 2\rho \right) \left( \thorn + \rho - \bar{\rho} \right) A_n\\
+ \left( \thorn - 2\bar{\rho} \right) \left( \eth' + \bar{\tau} - \tau' \right) A_m + \left( \thorn - 2\rho \right) \left( \eth + \tau - \bar{\tau}' \right) A_{\widebar{m}},
\end{multline}

\begin{multline}
\left( \mathcal{E} A \right)_n = - \left( \thorn' - 2\rho' \right) \left( \thorn' + \rho' - \bar{\rho}' \right) A_l - \left[ 2 \left( \eth - \tau \right) \left( \eth' - \bar{\tau} \right) - \left( \thorn' - 2\rho' \right) \left( \thorn + \rho - \bar{\rho} \right) \right] A_n\\
+ \left( \thorn' - 2\rho' \right) \left( \eth' + \tau' - \bar{\tau} \right) A_m + \left( \thorn' - 2\bar{\rho}' \right) \left( \eth - \tau + \bar{\tau}' \right) A_{\widebar{m}},
\end{multline}

\begin{multline}
\left( \mathcal{E} A \right)_m = - \left( \eth - 2\bar{\tau}' \right) \left( \thorn' - \rho' + \bar{\rho}' \right) A_l - \left( \eth - 2\tau \right) \left( \thorn + \rho - \bar{\rho} \right) A_n \\
+ \left[ 2\left( \thorn' - \rho' \right) \left( \thorn - \bar{\rho} \right) - \left( \eth - 2\tau \right) \left( \eth' + \tau' - \bar{\tau} \right) \right] A_m + \left( \eth - 2\tau \right) \left( \eth + \tau - \bar{\tau}' \right) A_{\widebar{m}},
\end{multline}

\begin{multline}
\left( \mathcal{E} A \right)_{\widebar{m}} = - \left( \eth' - 2\tau' \right) \left( \thorn' + \rho' - \bar{\rho}' \right) A_l - \left( \eth' - 2\bar{\tau} \right) \left( \thorn - \rho + \bar{\rho} \right) A_n \\
+ \left( \eth' - 2\tau' \right) \left( \eth' + \tau' - \bar{\tau} \right) A_m + \left[ 2\left( \thorn - \rho \right) \left( \thorn' - \bar{\rho}' \right) - \left( \eth' - 2\tau' \right) \left( \eth + \tau - \bar{\tau}' \right) \right] A_{\widebar{m}}.
\end{multline}
\end{subequations}

\section{\label{teukolskyoperators}Teukolsky operators}

The definition of adjoint of an operator defines $\pi_a, t_a, \sigma_a, w_a$ up to
a co-exact form by:
\begin{subequations}
\begin{align}
\label{pidef}
\nabla_a \pi^a(\phi, \eta) \equiv& \phi \left( \mathcal{O} \eta \right) - \eta \left( \mathcal{O}^\dagger \phi \right),\\
\label{tdef}
\nabla_a t^a(\eta, A) \equiv& \eta \left( \mathcal{T} A \right) - A_a \left( \mathcal{T}^\dagger \eta \right)^a,\\
\label{sigmadef}
\nabla_a \sigma^a(\phi, J) \equiv& \phi \left( \mathcal{S} J \right) - J_a \left( \mathcal{S}^\dagger \phi \right)^a,\\
\label{wdef}
\nabla_a w^a(A, A') \equiv& A_a \left( \mathcal{E} A' \right)^a - {A'}_a \left( \mathcal{E}^\dagger A \right)^a.
\end{align}
\end{subequations}
Here, 
\begin{subequations}
\begin{align}
\label{Odaggerdef}
\mathcal{O}^\dagger \Phi =& \left[ \left( \thorn' - \bar{\rho}' \right) \left( \thorn + \rho \right) - \left( \eth' - \bar{\tau} \right) \left( \eth + \tau \right) \right] \Phi,\\
\label{Tdaggerdef}
\left( \mathcal{T}^\dagger \eta \right)^a =& \left[ m^a \left( \thorn - \rho \right) - l^a \left( \eth - \tau \right) \right] \eta,\\
\label{Sdaggerdef}
(\mathcal{S}^\dagger \Phi)^a =& \half \left[ m^a \left( \thorn + \rho \right) - l^a \left( \eth + \tau \right) \right] \Phi,\\
\mathcal{E}^\dagger =& \mathcal{E}
\end{align}
\end{subequations}
and consequently
\begin{subequations}
\begin{align}
\label{pivalue}
\pi^a(\phi, \eta) =& l^a \phi \left( \thorn' - \rho' \right) \eta - n^a \eta \left( \thorn + \rho \right) \phi - m^a \phi \left( \eth' - \tau' \right) \eta + {\widebar{m}}^a \eta \left( \eth + \tau \right) \phi,\\
\label{tvalue}
t^a(\eta, A) =& \eta \left( m^a A_l - l^a A_m \right),\\
\label{sigmavalue}
\sigma^a(\phi, J) =& \half \phi \left( m^a J_l - l^a J_m \right),\\
\label{wvalue}
w^a(A, A') =& {F'}^{ab} A_b - F^{ab} {A'}_b = \left( \nabla^a A'^b - \nabla^b A'^a \right) A_b - \left( \nabla^a A^b - \nabla^b A^a \right) {A'}_b.
\end{align}
\end{subequations}

\section{\label{derivationofH}Derivation of $H$}

Implementing the equation $\mathcal{O} \mathcal{T} = \mathcal{S} \mathcal{E}$, one obtains the following relation \eqref{wsigmapitidentity} between $w^a$,  $\sigma^a$, $\pi^a$, and $t^a$.
\begin{equation}
\label{wsigmapitidentity}
\nabla_a \left( w^a(\mathcal{S}^\dagger \Phi, A) + \sigma^a(\Phi, \mathcal{E} A) - \pi^a(\Phi, \mathcal{T} A) - t^a(\mathcal{O}^\dagger \Phi, A) \right) = 0.
\end{equation}
As a consequence, in view of \cite{wald1990identically}, thm. 2, we learn that there must exist a 2-form $H$, constructed out of the 
fields $A_a, \Phi$ and their derivatives, such that  
$w + \sigma - \pi - t = {\rm d}\star H$, where $w = w_a(\mathcal{S}^\dagger \Phi, A) \star {\rm d}x^a$, $\sigma = \sigma_a(\Phi, \mathcal{E} A) \star {\rm d}x^a$, $\pi = \pi_a(\Phi, \mathcal{T} A) \star {\rm d}x^a$, and $t = t_a(\mathcal{O}^\dagger \Phi, A) \star {\rm d}x^a$.\par
In order to obtain an explicit form of $H$ needed in the main text, we first write $w$,  $\sigma$, $\pi$, and $t$ in GHP form.
\begin{subequations}
\begin{equation}
\begin{split}
w^a(\mathcal{S}^\dagger \Phi, A) =& \half \left( A_l n^a - A_n l^a \right) \hat{\Sigma} \Phi + \half \left( A_m \widebar{m}^a - A_{\widebar{m}} m^a \right) \hat{\Sigma} \Phi \\
+& \half \left( A_n m^a - A_m n^a \right) \left( \thorn - \rho \right) \left( \thorn + \rho \right) \Phi\\
-& \half \left( A_l \widebar{m}^a - A_{\widebar{m}} l^a \right) \left( \eth - \tau \right) \left( \eth + \tau \right) \Phi \\
+& \half \left( \phi_0 n^a - \bar{\phi}_2 l^a - \phi_1 m^a + \bar{\phi}_1 m^a \right) \left( \thorn + \rho \right) \Phi\\
-& \half \left( \phi_0 \widebar{m}^a + \bar{\phi}_0 m^a - \phi_1 l^a - \bar{\phi}_1 l^a \right) \left( \thorn + \rho \right) \Phi\\
-& \half \mathcal{O}^\dagger \Phi A_m l^a + \half \mathcal{O}^\dagger \Phi A_l m^a,
\end{split}
\end{equation}
\begin{multline}
\sigma^a(\Phi, \mathcal{E} A) = 2\pi \Phi \left( J_l m^a - J_m l^a \right) = \Phi \left( \eth - 2 \tau \right) \phi_1 l^a - \left( \thorn' - \rho' \right) \phi_0 l^a \\
+ \Phi \left( \eth' - \tau' \right) \phi_0 m^a - \Phi \left( \thorn - 2 \rho \right) \phi_1 m^a,
\end{multline}
\begin{multline}
\pi^a(\Phi, \mathcal{T} A) = - \Phi \left( \thorn' - \rho' \right) \phi_0 l^a + \phi_0 \left( \thorn + \rho \right) \Phi n^a + \Phi \left( \eth' - \tau' \right) \phi_0 m^a \\
- \phi_0 \left( \eth + \tau \right) \Phi {\widebar{m}}^a,
\end{multline}
and
\begin{equation}
t^a \left(\mathcal{O}^\dagger \Phi, A \right) = \left( m^a A_l - l^a A_m \right) \mathcal{O}^\dagger \Phi,
\end{equation}
\end{subequations}
where $\hat{\Sigma} = \eth \left( \thorn + \rho \right) - \left( \rho - \bar{\rho} \right) \left( \eth + \tau \right) = \thorn \left( \eth + \tau \right) - \left( \tau - \bar{\tau}' \right) \left( \thorn + \rho \right)$.\par
Writing $H = \half H_{a b} {\rm d}x^a\wedge {\rm d}x^b$, where $H_{ab} + H_{ba} = 0$, we 
seek to solve $w + \sigma - \pi - t = - \nabla^b H_{b a} \star {\rm d}x^a = {\rm d} \star H$. In other words,
\begin{equation}
\begin{split}
\label{unoredereddH}
- \nabla_b H^{ba} =& w^a(\mathcal{S}^\dagger \Phi, A) + \sigma^a(\Phi, \mathcal{E} A) - \pi^a(\Phi, \mathcal{T} A) - t^a(\mathcal{O}^\dagger \Phi, A)\\
=& \half \big\{ - \bar{\phi}_2 \left( \thorn + \rho \right) \Phi + \bar{\phi}_1 \left( \eth + \tau \right) \Phi + \phi_1 \left( \eth + \tau \right) \Phi + A_m \mathcal{O}^\dagger \Phi + 2\Phi \left( \eth - 2 \tau \right) \phi_1 \\
&- A_n \hat{\Sigma} \Phi + A_{\widebar{m}} \left( \eth - \tau \right) \left( \eth + \tau \right) \Phi \big\} l^a \\
+& \half \big\{ - \phi_0 \left( \thorn + \rho \right) \Phi + A_l \hat{\Sigma} \Phi - A_m \left( \thorn - \rho \right) \left( \thorn + \rho \right) \Phi \big\} n^a \\
+& \half \big\{ - \phi_1 \left( \thorn + \rho \right) \Phi + \bar{\phi}_1 \left( \thorn + \rho \right) \Phi - \bar{\phi}_0 \left( \eth + \tau \right) \Phi - A_l \mathcal{O}^\dagger \Phi\\
&- 2 \Phi \left( \thorn - 2 \rho \right) \phi_1 - A_{\widebar{m}} \hat{\Sigma} \Phi + A_n \left( \thorn - \rho \right) \left( \thorn + \rho \right) \Phi \big\} m^a \\
+& \half \big\{ \phi_0 \left( \eth + \tau \right) \Phi + A_m \hat{\Sigma} \Phi - A_l \left( \eth - \tau \right) \left( \eth + \tau \right) \Phi \big\} \widebar{m}^a.
\end{split}
\end{equation}
Next, we rewrite the divergence term in equation \eqref{unoredereddH} in GHP form.
\begin{equation}
\begin{split}
\nabla_b H^{ba} =& \left[ n^a \left( \thorn - \rho - \bar{\rho} \right) - l^a \left( \thorn' - \rho' - \bar{\rho}' \right) + \widebar{m}^a \left( \tau - \bar{\tau}' \right) - m^a \left( \tau' - \bar{\tau} \right) \right] H_{nl} \\
+& \left[ m^a \left( \thorn - \rho \right) - l^a \left( \eth - \tau \right) \right] H_{\widebar{m} n} + \left[ \widebar{m}^a \left( \thorn - \bar{\rho} \right) - l^a \left( \eth' - \bar{\tau} \right) \right] H_{m n} \\
+& \left[ m^a \left( \thorn' - \bar{\rho}' \right) - n^a \left( \eth - \bar{\tau}' \right) \right] H_{\widebar{m} l} + \left[ \widebar{m}^a \left( \thorn' - \rho' \right) - n^a \left( \eth' - \tau' \right) \right] H_{m l} \\
+& \left[ \widebar{m}^a \left( \eth - \tau - \bar{\tau}' \right) - m^a \left( \eth' - \tau' - \bar{\tau} \right) + n^a \left( \rho - \bar{\rho} \right) - l^a \left( \rho' - \bar{\rho}' \right) \right] H_{\widebar{m} m},
\end{split}
\end{equation}
where
\begin{equation}
\begin{split}
H^{ab} =& H_{nl} \left( l^a n^b - n^a l^b \right) + H_{\widebar{m} n} \left( l^a m^b - m^a l^b \right) + H_{m n} \left( l^a \widebar{m}^b - \widebar{m}^a l^b \right)\\
+& H_{\widebar{m} l} \left( n^a m^b - m^a n^b \right) + H_{m l} \left( n^a \widebar{m}^b - \widebar{m}^a n^b \right) + H_{\widebar{m} m} \left( m^a \widebar{m}^b - \widebar{m}^a m^b \right),
\end{split}
\end{equation}
which gives
\begin{equation}
\begin{split}
\label{oredereddH}
- \nabla_b H^{ba} =& \half \big\{ \left( \eth - \tau \right) \left[ 2 \phi_1 \Phi \right] - \left( \eth - \tau \right) \left[ A_n \left( \thorn + \rho \right) \Phi \right] + \left( \eth - \tau \right) \left[ A_{\widebar{m}}  \left( \eth + \tau \right) \Phi \right]\\
&+ \left( \thorn' - \rho' - \bar{\rho}' \right) \left[ A_m \left( \thorn + \rho \right) \Phi \right] - \left( \eth' - \tau' \right) \left[ A_m \left( \eth + \tau \right) \Phi \right] + \left( \rho' - \bar{\rho}' \right) \left[ A_l \left( \eth + \tau \right) \Phi \right] \big\} l^a \\
+& \half \big\{ - \left( \thorn - \rho - \bar{\rho} \right) \left[ A_m \left( \thorn + \rho \right) \Phi \right] + \left( \eth - \bar{\tau}' \right) \left[ A_l \left( \thorn + \rho \right) \Phi \right] - \left( \rho - \bar{\rho} \right) \left[ A_l \left( \eth + \tau \right) \Phi \right] \big\} n^a \\
+& \half \big\{ - \left( \thorn - \rho \right) \left[ 2 \phi_1 \Phi \right] + \left( \eth' - \tau' - \bar{\tau} \right) \left[ A_l \left( \eth + \tau \right) \Phi \right] - \left( \thorn' - \bar{\rho}' \right) \left[ A_l  \left( \thorn + \rho \right) \Phi \right] \\
&+ \left( \thorn - \rho \right) \left[ A_l \left( \thorn + \rho \right) \Phi \right] - \left( \thorn - \rho \right) \left[ A_{\widebar{m}} \left( \eth + \tau \right) \Phi \right] + \left( \tau' - \bar{\tau} \right) \left[ A_m \left( \thorn + \rho \right) \Phi \right] \big\} m^a \\
+& \half \big\{ \left( \thorn - \bar{\rho} \right) \left[ A_m \left( \eth + \tau \right) \Phi \right] - \left( \eth - \tau - \bar{\tau}' \right) \left[ A_l \left( \eth + \tau \right) \Phi \right] - \left( \tau - \bar{\tau}' \right) \left[ A_m \left( \thorn + \rho \right) \Phi \right] \big\} \widebar{m}^a.
\end{split}
\end{equation}
By integrating the equation \eqref{oredereddH}, we find $H =$  equation \eqref{H0} plus a total divergence after a very tedious calculation. 
\begin{equation}
\begin{split}
\label{H0}
H =& \left\{ \half A_m \left( \thorn + \rho \right) \Phi \right\} \left( l \wedge n \right) + \left\{ \phi_1 \Phi - \half A_n \left( \thorn + \rho \right) \Phi + \half A_{\widebar{m}} \left( \eth + \tau \right) \Phi \right\} \left( l \wedge m \right) \\
+& \left\{ - \half A_m \left( \eth + \tau \right) \Phi \right\} \left( l \wedge \widebar{m} \right) + \left\{ \half A_l \left( \thorn + \rho \right) \Phi \right\} \left( n \wedge m \right) + \left\{ \half A_l \left( \eth + \tau \right) \Phi \right\} \left( m \wedge \widebar{m} \right).
\end{split}
\end{equation}
One can simplify $\star H$  to
\begin{equation}
\label{starH}
\star H = i \left[ (\mathcal{S}^\dagger \Phi)_a {\rm d}x^a \right] \wedge A + i \phi_1 \Phi (l \wedge m) .
\end{equation}
where also
\begin{equation}
\label{epsilonidentity}
l^{[a} n^b m^c \widebar{m}^{d]} 
= \frac{-i}{4!} \varepsilon^{abcd}
\end{equation}
was used.

\end{appendices}


\begin{thebibliography}{9}
\bibitem{Aksteiner:2010rh}
S.~Aksteiner and L.~Andersson,
``Linearized gravity and gauge conditions,''
Class. Quant. Grav. \textbf{28} (2011), 065001
doi:10.1088/0264-9381/28/6/065001

\bibitem{Andersson:2019dwi}
L.~Andersson, T.~B\" ackdahl, P.~Blue and S.~Ma,
``Stability for linearized gravity on the Kerr spacetime,''
[arXiv:1903.03859 [math.AP]].

\bibitem{Chandrasekhar:1984siy}
S.~Chandrasekhar,
``The Mathematical Theory of Black Holes,''
Fundam. Theor. Phys. \textbf{9} (1984), 5-26
doi:10.1007/978-94-009-6469-3\_2

\bibitem{Chrzanowski:1975wv}
P.~L.~Chrzanowski,
``Vector Potential and Metric Perturbations of a Rotating Black Hole,''
Phys. Rev. D \textbf{11} (1975), 2042-2062
doi:10.1103/PhysRevD.11.2042

\bibitem{daCosta:2019muf}
R.~Teixeira da Costa,
``Mode stability for the Teukolsky equation on extremal and subextremal Kerr spacetimes,''
[arXiv:1910.02854 [gr-qc]].

\bibitem{Dafermos:2017yrz}
M.~Dafermos, G.~Holzegel and I.~Rodnianski,
``Boundedness and decay for the Teukolsky equation on Kerr spacetimes I: the case $|a|\ll M$,''
[arXiv:1711.07944 [gr-qc]].

\bibitem{Dafermos:2016uzj}
M.~Dafermos, G.~Holzegel and I.~Rodnianski,
``The linear stability of the Schwarzschild solution to gravitational perturbations,''
Acta Math. \textbf{222} (2019), 1-214
doi:10.4310/ACTA.2019.v222.n1.a1

\bibitem{dencker1982propagation}
N. Dencker
``On the propagation of polarization sets for systems of real principal type,''
J. Func. Anal. 46, 351-372 (1982)

\bibitem{Detweiler:2011tt}
S.~Detweiler,
``Gravitational radiation reaction and second order perturbation theory,''
Phys. Rev. D \textbf{85} (2012), 044048
doi:10.1103/PhysRevD.85.044048

\bibitem{DeWitt:1960fc}
B.~S.~DeWitt and R.~W.~Brehme,
``Radiation damping in a gravitational field,''
Annals Phys. \textbf{9} (1960), 220-259
doi:10.1016/0003-4916(60)90030-0

\bibitem{ehlers}
J. Ehlers, ``The geometry of the (modified) GHP-formalism,'' Commun. Math. Phys., 37:327-329 (1974)

\bibitem{Geroch:1973am}
R.~P.~Geroch, A.~Held and R.~Penrose,
``A space-time calculus based on pairs of null directions,''
J. Math. Phys. \textbf{14}  874-881 (1973),
doi:10.1063/1.1666410


\bibitem{Gralla:2012db}
S.~E.~Gralla,
``Second Order Gravitational Self Force,''
Phys. Rev. D \textbf{85} (2012), 124011
doi:10.1103/PhysRevD.85.124011

\bibitem{Green:2019nam}
S.~R.~Green, S.~Hollands and P.~Zimmerman,
``Teukolsky formalism for nonlinear Kerr perturbations,''
Class. Quant. Grav. \textbf{37} (2020) no.7, 075001
doi:10.1088/1361-6382/ab7075

\bibitem{Harte:2011ku}
A.~I.~Harte,
``Mechanics of extended masses in general relativity,''
Class. Quant. Grav. \textbf{29} (2012), 055012
doi:10.1088/0264-9381/29/5/055012

\bibitem{held1974formalism}
A.~Held,
``A formalism for the investigation of algebraically special metrics. I,''
Commun. Math. Phys. \textbf{37} (1974), 311.
doi:10.1007/BF01645944

\bibitem{held1975formalism}
A.~Held,
``A formalism for the investigation of algebraically special metrics. I,''
Commun. Math. Phys. \textbf{44} (1975), 211.
doi:10.1007/BF01608833

\bibitem{Hollands:2012sf}
S.~Hollands and R.~M.~Wald,
``Stability of Black Holes and Black Branes,''
Commun. Math. Phys. \textbf{321} (2013), 629-680
doi:10.1007/s00220-012-1638-1

\bibitem{hormander1964linear}
L. H\"ormander, {\it Linear partial differential operators I: Distribution theory and Fourier analysis,}
Springer (2003)

\bibitem{Kegeles:1979an}
L.~S.~Kegeles and J.~M.~Cohen,
``Constructive procedure for perturbations of spacetimes,''
Phys. Rev. D \textbf{19} (1979), 1641-1664
doi:10.1103/PhysRevD.19.1641

\bibitem{Ma:2017yui}
S.~Ma,
``Uniform energy bound and Morawetz estimate for extreme components of spin fields in the exterior of a slowly rotating Kerr black hole I: Maxwell field,''
Annales Henri Poincare \textbf{21} (2020) no.3, 815-863
doi:10.1007/s00023-020-00884-7

\bibitem{Merlin:2016boc}
C.~Merlin, A.~Ori, L.~Barack, A.~Pound and M.~van de Meent,
``Completion of metric reconstruction for a particle orbiting a Kerr black hole,''
Phys. Rev. D \textbf{94} (2016) no.10, 104066
doi:10.1103/PhysRevD.94.104066

\bibitem{Mino:1996nk}
Y.~Mino, M.~Sasaki and T.~Tanaka,
``Gravitational radiation reaction to a particle motion,''
Phys. Rev. D \textbf{55} (1997), 3457-3476
doi:10.1103/PhysRevD.55.3457


\bibitem{Pound:2017psq}
A.~Pound,
``Nonlinear gravitational self-force: second-order equation of motion,''
Phys. Rev. D \textbf{95} (2017) no.10, 104056
doi:10.1103/PhysRevD.95.104056

\bibitem{Pound:2013iba}
A.~Pound,
``Second-order Gravitational Self-force,''
doi:10.1142/9789814623995\_0059

\bibitem{Pound:2013faa}
A.~Pound, C.~Merlin and L.~Barack,
``Gravitational self-force from radiation-gauge metric perturbations,''
Phys. Rev. D \textbf{89} (2014) no.2, 024009
doi:10.1103/PhysRevD.89.024009

\bibitem{Pound:2012dk}
A.~Pound,
``Nonlinear gravitational self-force. I. Field outside a small body,''
Phys. Rev. D \textbf{86} (2012), 084019
doi:10.1103/PhysRevD.86.084019

\bibitem{Prabhu:2018jvy}
K.~Prabhu and R.~M.~Wald,
``Canonical Energy and Hertz Potentials for Perturbations of Schwarzschild Spacetime,''
Class. Quant. Grav. \textbf{35} (2018) no.23, 235004
doi:10.1088/1361-6382/aae9ae

\bibitem{Quinn:1996am}
T.~C.~Quinn and R.~M.~Wald,
``An Axiomatic approach to electromagnetic and gravitational radiation reaction of particles in curved space-time,''
Phys. Rev. D \textbf{56} (1997), 3381-3394
doi:10.1103/PhysRevD.56.3381

\bibitem{Teukolsky:1974yv}
S.~A.~Teukolsky and W.~H.~Press,
``Perturbations of a rotating black hole. III - Interaction of the hole with gravitational and electromagnetic radiation,''
Astrophys. J. \textbf{193} (1974), 443-461
doi:10.1086/153180

\bibitem{Teukolsky:1973ha}
S.~A.~Teukolsky,
``Perturbations of a rotating black hole. 1. Fundamental equations for gravitational electromagnetic and neutrino field perturbations,''
Astrophys. J. \textbf{185} (1973), 635-647
doi:10.1086/152444

\bibitem{Teukolsky:1972my}
S.~A.~Teukolsky,
``Rotating black holes - separable wave equations for gravitational and electromagnetic perturbations,''
Phys. Rev. Lett. \textbf{29} (1972), 1114-1118
doi:10.1103/PhysRevLett.29.1114

\bibitem{Wald:1978vm}
R.~M.~Wald,
``Construction of Solutions of Gravitational, Electromagnetic, Or Other Perturbation Equations from Solutions of Decoupled Equations,''
Phys. Rev. Lett. \textbf{41} (1978), 203-206
doi:10.1103/PhysRevLett.41.203

\bibitem{wald1990identically}
R. M. Wald
``On identically closed forms locally constructed from a field,''
J. Math. Phys. 31, 2378 (1990)

\bibitem{WIP}
Work in progress.
\end{thebibliography}
\end{document}